\newtheorem{theorem}{Theorem}
\newtheorem{lemma}[theorem]{Lemma}
\newtheorem{proposition}[theorem]{Proposition}
\newenvironment{proof}[1][Proof]{\par\noindent\textit{#1.}\ }{\hfill\(\square\)\par}
\newtheorem{definition}[theorem]{Definition}
\newtheorem{remark}[theorem]{Remark}
\newtheorem{example}[theorem]{Example}
\newcommand{\Rec}{\mathrm{Recognition}}
\newcommand{\Post}{\mathsf{Post}}
\begin{document}

\title{Coherent Comparison as Information Cost: A Cost-First Ledger Framework for Discrete Dynamics}

\author{Sebastian Pardo-Guerra}
\email{sebas@recognitionphysics.org}
\affiliation{Recognition Physics Institute}

\author{Megan Simons}
\email{msimons@recognitionphysics.org}
\affiliation{Recognition Physics Institute}

\author{Anil Thapa}
\email{athapa@recognitionphysics.org}
\affiliation{Recognition Physics Institute}

\author{Jonathan Washburn}
\email{washburn@recognitionphysics.org}
\affiliation{Recognition Physics Institute}

\begin{abstract}
We develop an information-theoretic framework for discrete dynamics grounded in a comparison-cost functional on ratios. Given two quantities compared via their ratio \(x=a/b\), we assign a cost \(F(x)\) measuring deviation from equilibrium (\(x=1\)). Requiring coherent composition under multiplicative chaining imposes a d'Alembert functional equation; together with normalization (\(F(1)=0\)) and quadratic calibration at unity, this yields a unique reciprocal cost functional (proved in a companion paper):
\[
J(x) = \tfrac{1}{2}\bigl(x + x^{-1}\bigr) - 1.
\]
This cost exhibits reciprocity \(J(x)=J(x^{-1})\), vanishes only at \(x=1\), and diverges at boundary regimes \(x\to 0^+\) and \(x\to\infty\), excluding ``nothingness'' configurations. Using \(J\) as input, we introduce a discrete ledger as a minimal lossless encoding of recognition events on directed graphs. Under deterministic update semantics and minimality (no intra-tick ordering metadata), we derive atomic ticks (at most one event per tick). Explicit structural assumptions (conservation, no sources/sinks, pairwise locality, quantization in \(\delta\mathbb{Z}\)) force balanced double-entry postings and discrete ledger units. To obtain scalar potentials on graphs with cycles while retaining single-edge impulses per tick, we impose time-aggregated cycle closure (no-arbitrage/clearing over finite windows). Under this hypothesis, cycle closure is equivalent to path-independence, and the cleared cumulative flow admits a unique scalar potential on each connected component (up to additive constant), via a discrete Poincaré lemma. On hypercube graphs \(Q_d\), atomicity imposes a \(2^d\)-tick minimal period, with explicit Gray-code realization at \(d=3\). The framework connects ratio-based divergences, conservative graph flows, and discrete potential theory through a coherence-forced cost structure.
\end{abstract}

\maketitle

\newpage

\section{Introduction}

What is the most primitive form of information processing? At its core, information processing requires the ability to \emph{distinguish}---to tell one entity from another, to measure difference, to compare. We argue that this act of comparison, when formalized consistently, uniquely determines a quantitative cost structure that governs all subsequent dynamics.

We take this perspective as foundational: \emph{comparison is the primitive operation}, and comparison has a measurable information-theoretic cost. Rather than postulating spacetime, fields, or particles as primitives, we begin with a single question: \emph{If we compare two quantities by their ratio \(x=a/b\), and we require that such comparisons compose coherently when chained together, what cost structure is forced upon us?}

The answer, developed rigorously in this paper, is that coherence uniquely determines a reciprocal cost functional. This cost then drives a discrete ledger formalism: recognition events become atomic updates on a directed graph, recorded as balanced postings that preserve total quantity while enabling local state changes. The ledger itself is not assumed but \emph{derived} from information-theoretic principles: it is the minimal lossless encoding of recognition events under deterministic, unambiguous update rules.

\paragraph{The information-theoretic viewpoint.}
In classical information theory, the cost of encoding a distribution is measured by entropy or code-length \citep{Shannon1948,CoverThomas2006}, while the cost of distinguishing two distributions is measured by divergences such as Kullback--Leibler and its generalizations \citep{KullbackLeibler1951,Csiszar1967,Bregman1967,AmariNagaoka2000}. These notions are tightly linked to log-likelihood ratios, exponential families, and the geometry of statistical models \citep{Jaynes2003,AmariNagaoka2000}.
We extend this perspective to \emph{ratio-based comparison}: when comparing quantities $a$ and $b$, the relevant observable is their ratio $x=a/b$, and the cost $F(x)$ quantifies an information-theoretic penalty for deviation from equilibrium ($x=1$, perfect balance). This choice isolates a scale-free comparison primitive (multiplicative rather than additive), aligning naturally with log coordinates and likelihood-ratio thinking.
This cost is not arbitrary---it must satisfy a coherence constraint: if we compare $a$ to $b$ (cost $F(a/b)$) and then $b$ to $c$ (cost $F(b/c)$), the total cost must equal that of comparing $a$ to $c$ directly (cost $F(a/c)$). This requirement forces a specific functional equation (d'Alembert type) whose unique solution, under normalization and calibration, is the canonical reciprocal cost
\[
J(x) = \frac{1}{2}(x + x^{-1}) - 1.
\]
This cost functional is the keystone of the present framework: it is not postulated but \emph{derived} from coherence (Theorem~T5, proved in the companion paper).

\paragraph{From cost to ledger: minimal encoding of recognition events.}
Once the cost is established, we ask: how do recognition events propagate through a system? We model this via a \emph{discrete ledger}---a sequential record of state updates. The ledger is not an additional structure; it emerges as the minimal way to encode recognition events under three constraints: (i) deterministic updates (the next state depends only on the current state and the event), (ii) minimality (no intra-tick ordering metadata), and (iii) lossless recording (no information is discarded). From these constraints we derive: \textbf{atomic ticks} (at most one event per tick; Theorem~T2), \textbf{double-entry postings} (conservation of total balance via balanced debit--credit pairs), and \textbf{quantized units} (discreteness forbids fractional amounts; Proposition~T8). Recognition events thus induce a discrete dynamics on a directed graph, with each event posting a signed increment $\pm\delta$ on exactly two nodes.

\paragraph{Potentials, cycles, and clearing.}
On graphs with cycles, atomic single-edge events generically create transient circulation (net flux around closed loops). To recover a scalar potential representation---enabling path-independent summation of costs---we impose a \emph{time-aggregated} cycle-closure (no-arbitrage / clearing) hypothesis: after netting flows over a finite clearing window, the cumulative flux around every cycle vanishes. Under this hypothesis, we prove that cycle closure is equivalent to path-independence (Theorem~T3), and that the cumulative edge flow admits a unique scalar potential on each connected component, up to an additive constant (Theorem~T4). This structure mirrors classical potential theory, but with explicit time-aggregation accounting for the impulse nature of atomic events.

\paragraph{Counting constraints and dimension.}
When the recognition structure is a hypercube graph $Q_d$, atomicity imposes a minimal period constraint: to visit all $2^d$ vertices without repetition requires at least $2^d$ ticks (Theorems~T6--T7). At $d=3$, the Gray code provides an explicit 8-tick Hamiltonian cycle. A conditional argument (combining the $2^d$ counting structure with additional topological linking and synchronization hypotheses) selects $d=3$ as a distinguished dimension (Section~\ref{sec:dimension}).

\paragraph{Scope and contributions of this paper.}
This paper develops a cost-first, ledger-based framework from an information-theoretic vantage point. First, we use the unique cost functional (Theorem~T5, proved in the companion paper) as a keystone input. Second, we derive the ledger structure from explicit minimality and determinism axioms (Axioms L1--L2). Third, we prove the equivalence of cycle closure and path-independence under the clearing hypothesis (T3), and establish the existence of discrete potentials (T4). Fourth, we derive the $2^d$-tick counting constraint and its $d=3$ realization (T6--T7).

The remainder of this paper proceeds as follows. Section~\ref{sec:preliminaries} motivates the d'Alembert composition law from coherence principles and establishes the canonical cost functional as the unique solution. Section~\ref{sec:mathematical} develops the ledger framework, deriving atomic ticks (T2), double-entry postings, quantized units (T8), cycle-flux conservation (T3), scalar potentials (T4), and minimal period bounds (T6--T7). Section~\ref{sec:conclusions} synthesizes the framework and discusses its scope.

\subsection{Scope and assumptions}\label{sec:scope}

This manuscript is a mathematical development of a discrete ledger model driven by a reciprocal cost on ratios. All claims of ``necessity'' are \emph{relative} to explicitly stated assumptions (see Definition~\ref{def:relative-necessity}).
\begin{itemize}
    \item \textbf{Input theorem (cost forcing).} We use Theorem~T5 (proved in the companion paper) to fix the canonical reciprocal cost $J$ from the d'Alembert composition law plus normalization and quadratic calibration.
    \item \textbf{Ledger-model assumptions.} The ledger results assume (i) deterministic state-update semantics $S_{t+1}=U(S_t,\sigma_t)$ where $\sigma_t$ is a sequence of events (Axiom L1), (ii) minimality (no intra-tick ordering metadata; Axiom L2), (iii) a conservation principle for total balance per tick, (iv) no external sources/sinks, (v) pairwise-local event updates (a single event affects only its two endpoints), and (vi) torsion-free quantized postings in $\delta\mathbb{Z}$. From (i) and (ii), we derive atomicity (Theorem T2): at most one event per tick.
    \item \textbf{Time-aggregated cycle-closure assumption (for potentials).} Double-entry and quantization constrain postings but do not, by themselves, force vanishing circulation on arbitrary cycles at \emph{each atomic tick}. For the potential results (T3--T4) we therefore additionally assume a \emph{time-aggregated} cycle closure (path-independence / no-arbitrage after clearing): over a fixed clearing window, the \emph{cumulative} edge flow has zero sum around every directed cycle.
    \item \textbf{Results established here.} Under these assumptions we derive: atomic tick updates (T2), balanced double-entry postings (Proposition: Double-entry constraint), algebraic consequences of quantization (Proposition T8), and we prove the equivalence of time-aggregated cycle closure and path-independence (T3) and, under this hypothesis, the existence/uniqueness of discrete scalar potentials on connected components (T4), as well as the minimal schedule period bound $T\ge 2^d$ with a Gray-code realization at $d=3$ (T6--T7).
    \item \textbf{Additional hypotheses.} The $D=3$ discussion (Section~\ref{sec:dimension}) combines the $2^d$-period constraint with additional synchronization/linking hypotheses (``gap-45''/golden-angle motivation). It should be read as conditional on those extra hypotheses.
\end{itemize}

The framework begins with a seemingly abstract question: if we wish to compare two quantities by their ratio, and we require that such comparisons compose coherently, what constraints does this impose? This question leads naturally to the d'Alembert functional equation, which encodes the requirement that comparison costs combine consistently under multiplication and division. Together with normalization and a quadratic calibration at unity, this constraint forces a unique reciprocal cost functional (Theorem~T5).

Central to this manuscript is the canonical reciprocal cost functional 
\begin{center}
    $J(x) = \frac{1}{2}(x + x^{-1}) - 1$ 
\end{center}
as the unique solution to the d'Alembert composition law. The full uniqueness proof of $J$ is deferred to the companion paper; here we treat it as a keystone input and focus on the discrete ledger consequences under explicitly stated structural assumptions.

The remainder of this paper is organized as follows. Section~\ref{sec:preliminaries} elucidates the philosophical connection between coherent comparison and the d'Alembert composition law, motivating why this functional equation is adopted as the primitive composition axiom. Section~\ref{sec:mathematical} develops the ledger-based framework, using the cost uniqueness theorem (T5) as input and deriving the remaining theorems as consequences. Section~\ref{sec:conclusions} synthesizes the framework and discusses its scope. 

\section{Motivation: From Coherent Comparison to the d'Alembert Composition Law}\label{sec:preliminaries}

The framework rests on a simple but profound insight: if recognition involves comparison, and comparison has a cost, then the requirement that comparisons compose coherently uniquely determines that cost structure. This section elucidates the philosophical and mathematical connection between the idea of coherent comparison and the d'Alembert functional equation.

\subsection{The Primacy of Comparison}

At its most fundamental level, recognition is a relational act: one entity recognizes another. This recognition involves some form of comparison---measuring similarity, difference, or correspondence. Here we formalize this by asking: if we compare two quantities by their ratio $x = a/b$, what ``cost'' or ``defect'' does this comparison incur?

Central to this framework is the idea that the comparison cost $F(x)$ should depend only on the ratio $x$ itself, not on the absolute magnitudes of $a$ and $b$. This reflects the intuition that recognition is fundamentally about \emph{relationships}, not absolute values. Moreover, we require that $F(1) = 0$: when the ratio equals unity, there is perfect balance, no defect, and hence no cost. This will be stated as Axiom A1 (Normalization).

\subsection{Coherent Composition: The d'Alembert Constraint}

The key insight comes from requiring that comparisons compose coherently. Suppose we compare $a$ to $b$, obtaining ratio $x = a/b$ with cost $F(x)$. Then we compare $b$ to $c$, obtaining ratio $y = b/c$ with cost $F(y)$. What should be the cost of comparing $a$ to $c$ directly?

We have two routes to the comparison $a$ to $c$:
\begin{itemize}
    \item \textbf{Direct route:} Compare $a$ to $c$ directly, giving ratio $xy = (a/b)(b/c) = a/c$ with cost $F(xy)$.
    \item \textbf{Composed route:} Combine the two comparisons, which should somehow combine their costs.
\end{itemize}

For the framework to be coherent, these routes should be equivalent. But how do costs combine? The answer lies in the structure of the composition law itself. When we compose comparisons multiplicatively ($xy$), we also have access to the \emph{relative} ratio between the two comparison ratios:
\[
\frac{x}{y}=\frac{(a/b)}{(b/c)}=\frac{ac}{b^{2}}.
\]
This quantity is not the direct ratio $a/c$ (which is $xy$); rather, it compares the two intermediate ratios to each other. 

To formalize coherence, we require that the cost of the composed route depends only on the costs $F(x)$ and $F(y)$ of the individual comparisons, and on the relationship between $x$ and $y$ (captured by $x/y$). Moreover, the composition should be \emph{symmetric} in the sense that interchanging the roles of $x$ and $y$ (which sends $(x,y) \mapsto (y,x)$ and $(xy, x/y) \mapsto (xy, y/x)$) should yield a consistent constraint.

A natural requirement is that the composition law relates $F$ evaluated on the pair $(xy, x/y)$ to $F$ evaluated on $x$ and $y$ individually. The most general symmetric, bilinear form (in $F(x)$ and $F(y)$) that respects the multiplicative structure is:
\[
F(xy) + F(x/y) = \alpha F(x)F(y) + \beta F(x) + \beta F(y) + \gamma,
\]
where $\alpha, \beta, \gamma$ are constants. The symmetry requirement (invariance under $x \leftrightarrow y$) forces the coefficients of $F(x)$ and $F(y)$ to be equal, hence both are $\beta$.

To determine the constants, we impose natural constraints:
\begin{itemize}
    \item \textbf{Consistency with normalization:} When $x=y=1$, we have $F(1) + F(1) = \alpha F(1)^2 + 2\beta F(1) + \gamma$. With $F(1)=0$ (Axiom A1), this gives $\gamma = 0$.
    \item \textbf{Reciprocity compatibility:} The equation should be consistent with the natural expectation that $F(x) = F(x^{-1})$ (reciprocity), which will be a consequence of the final form. This symmetry is encoded in the $x/y$ term.
    \item \textbf{Scaling behavior:} For small deviations from unity, the composition should reduce appropriately. The choice $\alpha = 2, \beta = 2$ yields the standard d'Alembert form with the correct scaling properties.
\end{itemize}

This leads to the d'Alembert-type functional equation:
\begin{equation}
F(xy) + F(x/y) = 2F(x)F(y) + 2F(x) + 2F(y).
\label{eq:dAlembert-motivation}
\end{equation}

This equation encodes the requirement that the cost of comparing $a$ to $c$ via the composed route ($F(xy) + F(x/y)$) equals the cost of combining the individual comparison costs ($2F(x)F(y) + 2F(x) + 2F(y)$). The specific form of the right-hand side---combining both multiplicative ($F(x)F(y)$) and additive ($F(x) + F(y)$) terms---reflects the dual nature of composition: comparisons can be chained (multiplicative) or compared to each other (additive with interaction). The factor of 2 in each term ensures proper scaling and symmetry. For background on d'Alembert-type functional equations, see e.g.\ \cite{Aczel1966,Kuczma2009}.

This is Axiom A2 (Composition Law). While A2 can be derived from more fundamental principles about symmetric bilinear composition operators (as sketched above), for the purposes of this manuscript we take it as a primitive axiom. The key insight is that the d'Alembert form is the unique symmetric, bilinear composition law (up to the normalization and scaling choices above) that respects the multiplicative structure of ratios while maintaining coherence. Alternative composition laws can be studied, but the uniqueness theorem we use (T5) is specific to A2 together with A1 and A3.

\subsection{Calibration and Uniqueness}

The d'Alembert equation alone does not uniquely determine the cost function. We require one additional constraint to fix the scale. Axiom A3 (Quadratic calibration) specifies that in log coordinates $t=\ln x$, the cost has unit quadratic behavior at unity:
\[
\lim_{t\to 0}\frac{2F(e^t)}{t^2}=1.
\]
If $F$ is twice differentiable at unity, this is equivalent to normalizing the second derivative of the log-lift $G(t)=F(e^t)$ via $G''(0)=1$. This condition fixes the overall scale of the cost.

Together, these three axioms (A1: Normalization, A2: Composition Law, A3: Quadratic calibration) uniquely determine the cost functional. This is the content of Theorem T5, the keystone theorem for the cost-first development. The unique solution is:
\begin{equation}
J(x) = \frac{1}{2}(x + x^{-1}) - 1.
\label{eq:cost-functional}
\end{equation}

Interpretation: coherent comparison together with normalization and calibration eliminates functional-form freedom; under these axioms, the cost is uniquely determined.

\subsection{From Cost to Existence}

Once the cost functional is established, a cascade of implications follows. The cost function $J(x)$ has the property that $J(x) \geq 0$ with equality only when $x = 1$. In other words, perfect balance ($x=1$) corresponds to zero cost, while any deviation incurs a positive penalty.

As $x \to 0^+$ or $x \to \infty$, the cost diverges: $J(x) \to \infty$. We formalize the corresponding boundary consequence as Theorem~T1 (Boundary divergence / Meta-Principle) in the next subsection.

The cost function also exhibits \emph{reciprocity}: $J(x) = J(x^{-1})$ for all $x > 0$. This symmetry is compatible with representing recognition events in reversible pairs. In this manuscript, the balanced (double-entry) posting rule is obtained from explicit ledger-model assumptions (conservation, no sources/sinks, and pairwise-local events), as developed below.

The remainder of this paper develops the discrete ledger consequences (T2--T8) under explicitly stated structural assumptions.

\section{Mathematical Framework}\label{sec:mathematical}

This section develops the ledger-based framework used in this manuscript. The keystone cost uniqueness theorem (T5) is stated here and proved in the companion paper; we then use it as an input to derive the remaining ledger-structure results under explicit structural assumptions.

The logical structure follows the \emph{cost-first} foundation (Figure~\ref{fig:cost-first-chain}).

\begin{figure}[tbp]
\centering
\small
\begin{minipage}{0.95\linewidth}
\centering
\textbf{Primitive Axioms (A1--A3)} $\to$ \textbf{T5} (Cost Unique, Keystone) \\
$\to$ \textbf{Bal predicate} (zero-cost equilibrium) \quad\&\quad \textbf{Exists predicate} (finite-cost admissibility) \\
$\to$ \textbf{T1} (boundary divergence / Meta-Principle) \\
$\to$ \textbf{Ledger Structure} $\to$ \textbf{T2} (Atomic Tick) \\
$\to$ \textbf{T8} (Ledger Units) $\to$ \textbf{T3} (Cycle flux; under time-aggregated cycle closure) $\to$ \textbf{T4} (Potential) \\
$\to$ \textbf{T6--T7} (Minimal period) $\to$ \textbf{D=3} (conditional).
\end{minipage}
\caption{High-level cost-first dependency chain used in this manuscript (assumptions made explicit in Section~\ref{sec:scope}).}
\label{fig:cost-first-chain}
\end{figure}

\subsection{Notational Convention: Mathematical Necessity}

To ensure precision, we establish the following convention for claims of mathematical necessity:

\begin{definition}[Relative Necessity]
\label{def:relative-necessity}
In this paper, a claim that property $P$ is \emph{mathematically necessary} means: 
$P$ is derivable from the explicitly listed axioms, definitions, and structural
assumptions stated in this document. Every such claim must either be proved
directly in the text (with explicit hypotheses), or reduced to previously stated
lemmas/theorems (with explicit hypotheses), or else be surfaced as an explicitly
labeled assumption/axiom.
\end{definition}

This convention ensures that claims of necessity are verifiable and not based on
hidden premises. When we state that a structure is ``forced'' or ``required,'' we
will either prove it as a theorem (with explicit hypotheses), or state it as an
explicit assumption.

\subsection{Taxonomy: Primitive Axioms, Derived Theorems, and Structural Assumptions}

To clarify what is assumed versus what is derived, we classify the foundational 
elements of the present framework according to the \emph{cost-first} foundation. The key 
insight: the d'Alembert composition law is \emph{primitive}, and everything else---including 
the Meta-Principle---is derived.

\textbf{Notational convention:} We use \textbf{A1--A3} to denote the three \emph{cost axioms} 
(primitive assumptions about the cost functional), and \textbf{L1--L2} to denote the two 
\emph{ledger axioms} (structural assumptions about state updates). This distinction is 
maintained throughout the manuscript.

\begin{table}[tbp]
\centering
{%
\setlength{\tabcolsep}{3pt}%
\renewcommand{\arraystretch}{0.9}%
\scriptsize
\caption{Taxonomy of foundational elements in the cost-first framework}
\label{tab:taxonomy}
\begin{tabular}{@{}ll@{}}
\toprule
\textbf{Category} & \textbf{Elements} \\
\midrule
\textbf{Primitive Cost Axioms (A1--A3)} & 
A1 (Normalization): $F(1) = 0$ \\
& A2 (Composition Law): $F(xy) + F(x/y) = 2F(x)F(y) + 2F(x) + 2F(y)$ \\
& A3 (Quadratic calibration): $\displaystyle \lim_{t\to 0}\frac{2F(e^t)}{t^2}=1$ \\
\midrule
\textbf{Derived Theorems} & 
T5: Cost Uniqueness: $J(x) = \frac{1}{2}(x + x^{-1}) - 1$ (from A1--A3) \\
& \textbf{Perfect balance:} $\mathrm{Bal}(x)\Longleftrightarrow J(x)=0\Longleftrightarrow x=1$ \\
& \textbf{Finite-cost admissibility:} $\mathrm{Exists}(x)\Longleftrightarrow J(x)<\infty$ (equivalently $x>0$ for this $J$) \\
& \textbf{T1 (Meta-Principle)}: $J(0^+) \to \infty$ (Nothing costs infinity) \\
& T2: Atomic Tick (from Axioms L1 + L2: minimality + non-commutativity) \\
& T3: Equivalence of time-aggregated cycle closure and path-independence (under clearing) \\
& T4: Potential Uniqueness for cleared cumulative flow (from T3 + discrete Poincar\'{e} lemma) \\
& T6: Minimal period $2^d$ (eight ticks for $d=3$) (from T2 + scheduler constraints) \\
& T7: Coverage Lower Bound (from T6 + pigeonhole principle) \\
& T8: Ledger Units (proposition: algebraic consequences of quantization / discreteness assumptions) \\
\midrule
\textbf{Ledger Axioms (L1--L2)} & 
L1: Deterministic state-update semantics ($S_{t+1} = U(S_t, \sigma_t)$) \\
& L2: Minimality of ledger structure (no ordering metadata) \\
& (From L1 + L2, we derive T2: atomicity) \\
\midrule
\textbf{Additional Structural Assumptions}  
& Conservation principle: Total balance invariant per tick \\
& No external sources/sinks \\
& Pairwise locality of events (single event affects only its two endpoints) \\
& Time-aggregated cycle closure (clearing / no-arbitrage after netting) \\
& Discreteness: No torsion in ledger structure \\
& Lossless interface: Discrete-continuous mapping preserves information \\
\midrule
\textbf{Definitions} & 
Recognition event: $(a,b) \in A \times B$ \\
& Ledger state: $S_t \in \mathcal{S}$ \\
& Tick: Minimal temporal unit for one state update \\
& Per-tick edge increment: $\delta\Delta(e,t) \in \delta\mathbb{Z}$ \\
& Recognition structure: Directed graph $G=(X,E)$ \\
\bottomrule
\end{tabular}
}%
\end{table}

This taxonomy reflects the \emph{cost-first} foundation: the three primitive axioms (A1--A3) fix the unique cost functional $J$ (T5), and Theorem~T1 (Meta-Principle) is treated as a derived boundary theorem (Section~\ref{sec:preliminaries}).

\subsection{The Primitive Foundation: Cost Functional and Uniqueness (T5)}

We begin with the keystone theorem that establishes the unique cost functional. This theorem is the foundation from which all other structures derive.

\textbf{Note:} The following three axioms (A1--A3) are the \emph{cost axioms}, distinct from the ledger axioms (L1--L2) introduced in Section~\ref{sec:ledger-axioms}. The cost axioms determine the functional form of the cost $J$, while the ledger axioms govern the discrete update structure.

\textbf{Axiom A1 (Normalization).} The cost at unity is zero: $F(1) = 0$. 
Perfect balance is free.

\textbf{Axiom A2 (Composition Law).} For all $x, y > 0$:
\begin{equation}
F(xy) + F(x/y) = 2F(x)F(y) + 2F(x) + 2F(y).
\label{eq:composition-law}
\end{equation}
This is the d'Alembert functional equation in multiplicative form. As motivated in Section~\ref{sec:preliminaries}, this form is the unique symmetric, bilinear composition law (up to normalization) that respects the multiplicative structure of ratios while ensuring coherent composition of comparison costs. It ensures that costs combine coherently under multiplicative composition of ratios.

\textbf{Axiom A3 (Quadratic calibration).} In log coordinates $t=\ln x$, define $G(t)=F(e^t)$. We require
\[
\lim_{t\to 0}\frac{2G(t)}{t^2}=\lim_{t\to 0}\frac{2F(e^t)}{t^2}=1.
\]
This fixes the overall scale (and, when $G$ is twice differentiable at $0$, coincides with $G''(0)=1$).

\begin{theorem}[T5: Cost Uniqueness]
\label{thm:cost-unique}
Let $F: \mathbb{R}_{>0} \rightarrow \mathbb{R}$ satisfy:
\begin{enumerate}
    \item \textbf{Normalization (A1):} $F(1) = 0$
    \item \textbf{Composition Law (A2):} $F(xy) + F(x/y) = 2F(x)F(y) + 2F(x) + 2F(y)$
    \item \textbf{Quadratic calibration at unity:} $\displaystyle \lim_{t\to 0}\frac{2F(e^t)}{t^2}=1$
\end{enumerate}
Then, $F$ is uniquely determined:
\begin{equation}
    \label{eq:cost}
    F(x) = \frac{1}{2}(x + x^{-1}) - 1.
\end{equation}
We denote this unique cost functional by $J$.
\end{theorem}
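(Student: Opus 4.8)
The plan is to linearize the multiplicative d'Alembert equation by passing to logarithmic coordinates, reduce it to the classical (additive) d'Alembert functional equation, extract just enough regularity from the calibration hypothesis to invoke the standard $\cos/\cosh$ classification, and then use the calibration a second time to fix the branch and scale. First I would set $t=\ln x$ and work with the log-lift $G(t)=F(e^{t})$ introduced in A3. Writing $x=e^{s}$ and $y=e^{u}$, Axiom A2 becomes $G(s+u)+G(s-u)=2G(s)G(u)+2G(s)+2G(u)$. The additive inhomogeneities can be absorbed by the shift $H(t)=G(t)+1$: substituting and expanding, every linear-in-$H$ term cancels and the constants collapse, leaving the homogeneous classical d'Alembert equation
\[
H(s+u)+H(s-u)=2H(s)H(u).
\]
Normalization A1 gives $H(0)=F(1)+1=1$, and setting $s=0$ forces $H(-u)=H(u)$, so $H$ is even.

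The hard part will be supplying the regularity required by the classification theorem, since A1--A3 constrain $G$ only in a neighborhood of $t=0$. From A3 one has $G(t)=\tfrac{1}{2}t^{2}+o(t^{2})$ as $t\to 0$, so $G$ is continuous at $0$ and $H$ is bounded on some interval $[-\epsilon,\epsilon]$. I would then bootstrap this local boundedness to all of $\mathbb{R}$: rewriting the equation as $H(s+u_{0})=2H(s)H(u_{0})-H(s-u_{0})$ for a fixed small $u_{0}$, boundedness of $H$ on an interval propagates to the adjacent translate, and iterating in steps of $u_{0}$ (using evenness for the left half) covers every compact set. A real-valued solution of the classical d'Alembert equation that is locally bounded (equivalently measurable, or merely continuous at one point) is necessarily of exponential-generated form $H(t)=\tfrac{1}{2}(e^{\lambda t}+e^{-\lambda t})$, i.e.\ $H(t)=\cosh(\omega t)$ or $H(t)=\cos(\omega t)$ for some real $\omega\ge 0$ (with $\omega=0$ giving $H\equiv 1$); this is the classical regularity classification for d'Alembert's equation \cite{Aczel1966,Kuczma2009}. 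This classification is where the genuine content sits; the remaining steps are bookkeeping.

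Finally I would invoke the calibration once more to select the branch and pin the scale. Substituting each candidate into $\lim_{t\to 0}2(H(t)-1)/t^{2}$ gives $+\omega^{2}$ on the hyperbolic branch and $-\omega^{2}\le 0$ on the trigonometric branch (and $0$ for the constant solution). Since A3 requires this limit to equal $+1$, the cosine family and the constant solution are excluded and the hyperbolic branch is forced with $\omega^{2}=1$, hence $\omega=1$ by evenness. Unwinding the substitutions, $H(t)=\cosh t$ yields $G(t)=\cosh t-1$, so that
\[
F(x)=G(\ln x)=\cosh(\ln x)-1=\tfrac{1}{2}\bigl(x+x^{-1}\bigr)-1=J(x),
\]
which establishes both existence and uniqueness. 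I expect the regularity-propagation step to be the principal obstacle: without it the d'Alembert equation admits pathological (non-measurable, Hamel-basis) solutions, so the argument hinges on showing that the $O(t^{2})$ control supplied by A3 is strong enough to rule these out.
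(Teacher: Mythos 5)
Your reduction is algebraically sound, and since the paper itself contains no proof of T5 (its ``proof'' is a citation to the companion paper \cite{WashburnZlatanovic2026}), your proposal has to stand entirely on its own. The skeleton does stand: the log-lift, the shift $H(t)=G(t)+1$ collapsing A2 to the classical d'Alembert equation $H(s+u)+H(s-u)=2H(s)H(u)$, the deductions $H(0)=1$ and evenness, and the final branch/scale selection from A3 are all correct. The genuine gap sits exactly where you predicted the content would be, but it is a different problem from the one your bootstrap solves. The classification you invoke --- that a \emph{locally bounded} real solution of d'Alembert's equation must be $\cos(\omega t)$ or $\cosh(\omega t)$ --- is false. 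Unlike Cauchy's equation, whose pathological solutions are unbounded on every interval, d'Alembert's equation admits pathological solutions that are \emph{globally} bounded: take any discontinuous additive $B:\mathbb{R}\to\mathbb{R}$ (Hamel construction) and set $H(t)=\cos(B(t))$. Then $H(s+u)+H(s-u)=2\cos(B(s))\cos(B(u))=2H(s)H(u)$, $|H|\le 1$, $H(0)=1$, and $H$ is even, yet $H$ is not of classical form: if $\cos(B(t))=\cos(\omega t)$ for all $t$, a union-of-two-subgroups argument forces $B(t)-\omega t\in 2\pi\mathbb{Z}$ for all $t$ (or the same with $-\omega$), and an additive function valued in $2\pi\mathbb{Z}$ vanishes identically (since $\psi(t)=n\,\psi(t/n)\in 2\pi n\mathbb{Z}$ for every $n$), so $B$ would be linear. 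Hence your boundedness-propagation bootstrap, although each step of it is valid, establishes a property that cannot exclude the pathologies: boundedness is simply not a regularizing hypothesis for this equation, and in particular it is not ``equivalent'' to measurability here.

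The repair is to use what A3 actually hands you, which is strictly stronger than boundedness: $H$ is continuous at the single point $t=0$ with $H(0)=1$, and for d'Alembert's equation continuity at one point \emph{does} suffice. One concrete route: by the structure theorem, every nonzero solution is $H=\tfrac12\left(E+E^{-1}\right)$ for some exponential $E:\mathbb{R}\to\mathbb{C}\setminus\{0\}$; writing $|E|=e^{A}$ and $E/|E|=e^{i\theta}$ with $A:\mathbb{R}\to\mathbb{R}$ additive and $\theta$ a homomorphism into $\mathbb{R}/2\pi\mathbb{Z}$, realness of $H$ gives $\sinh(A(t))\sin(\theta(t))=0$ for every $t$; since $\{A=0\}$ and $\{\sin\theta=0\}$ are subgroups of $\mathbb{R}$ whose union is all of $\mathbb{R}$, one of them equals $\mathbb{R}$. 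In the first case $H=\cos\theta$, and continuity at $0$ forces the homomorphism $\theta$ to be continuous, hence $\theta(t)=\omega t$ and $H(t)=\cos(\omega t)$; in the second case $H(t)=\epsilon(t)\cosh(A(t))$ with $\epsilon=\pm 1$ multiplicative, and continuity at $0$ forces $\epsilon\equiv 1$ and $A$ continuous, hence linear, so $H(t)=\cosh(\lambda t)$. From there your calibration step --- excluding the cosine branch and the constant solution, pinning $\lambda^{2}=1$, and unwinding to $F(x)=\cosh(\ln x)-1=J(x)$ --- goes through verbatim. So the strategy is viable, but the crux must be carried by continuity at a point (or measurability), not by local boundedness; as written, the step that invokes the classification would fail.
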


\begin{proof}
See the companion paper \emph{Uniqueness of the Canonical Reciprocal Cost} \cite{WashburnZlatanovic2026} for a self-contained proof.
\end{proof}

The cost functional has the following key properties:
\begin{itemize}
    \item \textbf{Reciprocity:} $J(x) = J(x^{-1})$ for all $x > 0$
    \item \textbf{Non-negativity:} $J(x) \geq 0$ with equality iff $x = 1$
    \item \textbf{Boundary divergence:} $J(x) \to \infty$ as $x \to 0^+$ or $x \to \infty$
\end{itemize}
The boundary divergence is \emph{not assumed}---it is a consequence of the unique 
functional form. This is why the Meta-Principle is derived, not primitive.

\subsubsection{Unique zero-cost configuration (Law of Existence)}

The unique cost functional immediately yields the \emph{Law of Existence}:

\begin{definition}[Perfect Balance predicate]
A configuration $x > 0$ \emph{is in perfect balance} (in the present sense) if and only if its defect 
collapses to zero, that is, 
\begin{equation}
\text{Bal}(x) \;\Longleftrightarrow\; J(x) = 0.
\end{equation}
\end{definition}

The predicate $\mathrm{Bal}(x)$ is a \emph{technical consistency predicate} in this framework, not a claim about ontological presence. Note that only $x = 1$ satisfies $\mathrm{Bal}(x)$ (zero defect). Indeed, starting from the definition of $J$:
\begin{equation*}
    J(x)=\frac12\left(x+x^{-1}\right)-1,
\end{equation*}
we have that 
\begin{align*}
J(x)=0
&\Longleftrightarrow \frac12\left(x+x^{-1}\right)-1=0 \\
&\Longleftrightarrow x+x^{-1}=2 \\
&\Longleftrightarrow x^2-2x+1=0 \qquad \text{(multiply by $x>0$)}\\
&\Longleftrightarrow (x-1)^2=0 \\
&\Longleftrightarrow x=1.
\end{align*}

In this way, configurations with $x \neq 1$ have $J(x) > 0$ and are \emph{recognizable} precisely because their nonzero defect is quantifiable and enables comparison and composition via the ledger rules.

\begin{definition} \label{def:exists_finite}
We say that a configuration \emph{exists} in this framework if and only if it satisfies finite-cost admissibility (existence in the present sense).
\begin{equation*}
    \mathrm{Exists}(x)\;:\Longleftrightarrow\;J(x)<\infty. 
    \label{def:exists_finite}
\end{equation*}
\end{definition}

Therefore, every configuration \(x > 0\) \emph{exists} in this sense, as it has finite cost. In contrast, the boundary regimes \(x \to 0^{+}\) and \(x \to \infty\) correspond to cost blow-up and are thus excluded from admissible configurations. In the next subsection, we show that these divergent limits admit a natural interpretation as ``nothingness'' within the present framework, marking the breakdown of recognizability rather than the presence of any physical or informational configuration.

\subsubsection{Properties of the Cost Function}

Near equilibrium ($x=1$), the cost function exhibits quadratic behavior. Let $x=e^\epsilon$ for small $\epsilon$. Then
\begin{equation}
    J(e^\epsilon)=\frac{1}{2}(e^\epsilon + e^{-\epsilon}) - 1 = \cosh(\epsilon)-1 = \frac{\epsilon^2}{2} + \frac{\epsilon^4}{24} + \cdots \approx \frac{1}{2}\epsilon^2,
\end{equation}
reproducing a Euclidean metric in log-space. This local quadratic structure ensures well-behaved optimization near equilibrium.

As a modeling choice, consider the recurrence equation $x_{n+1}=1+1/x_n$ as a simple self-similar update rule. Fixed points satisfy $x=1+1/x$, yielding the quadratic equation
\[
x^2-x-1=0 \quad \Rightarrow \quad \phi=\frac{1+\sqrt{5}}{2} \approx 1.618.
\]
At $\phi$, the additive (self) and reciprocal (other) components balance. The recognition cost evaluates to
\[
J(\phi)=\frac{1}{2}\left(\phi+\frac{1}{\phi}\right)-1=\phi-\frac{3}{2}\approx 0.118.
\]
By reciprocity, $J(\phi) = J(\phi^{-1})$, so both $\phi$ and its reciprocal $1/\phi \approx 0.618$ lie at the same cost. Thus, under this update rule, $\phi$ marks a natural self-similar scale where the cost function exhibits special symmetry.

\begin{lemma}
If $f(x_1)=f(x_2)$ with $f(x)=x+x^{-1}$, then $x_1=x_2$ or $x_1=1/x_2$.
\end{lemma}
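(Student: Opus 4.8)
The plan is to reduce the hypothesis $f(x_1)=f(x_2)$ to a single factored equation whose zero-product structure yields exactly the two stated alternatives. Since the ambient framework restricts attention to $\mathbb{R}_{>0}$, both $x_1$ and $x_2$ are nonzero, so the reciprocals and the product $x_1x_2$ are well defined and I may clear denominators freely without worrying about degenerate cases.

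First I would write the hypothesis explicitly as $x_1+x_1^{-1}=x_2+x_2^{-1}$ and move everything to one side, grouping the linear part against the reciprocal part:
\[
(x_1-x_2)+\left(x_1^{-1}-x_2^{-1}\right)=0.
\]
The one genuinely substantive observation is the identity $x_1^{-1}-x_2^{-1}=(x_2-x_1)/(x_1x_2)=-(x_1-x_2)/(x_1x_2)$, which exposes $(x_1-x_2)$ as a common factor. Substituting and factoring gives
\[
(x_1-x_2)\left(1-\frac{1}{x_1x_2}\right)=0,
\]
and multiplying through by the positive quantity $x_1x_2$ yields the clean form
\[
(x_1-x_2)\,(x_1x_2-1)=0.
\]

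The zero-product property then delivers the dichotomy immediately: either $x_1-x_2=0$, giving $x_1=x_2$, or $x_1x_2-1=0$, giving $x_1x_2=1$ and hence $x_1=1/x_2$ (using $x_2\neq 0$). These are precisely the two claimed cases, so the proof closes at this point.

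There is no real obstacle here; the result is elementary and the only point requiring any care is invoking the domain hypothesis $x_1,x_2>0$ to justify forming reciprocals and clearing denominators. It is worth recording the conceptual payoff, however: the lemma shows that the fibers of $f$, and therefore of the cost functional $J=\tfrac{1}{2}f-1$, are exactly the reciprocal pairs $\{x,x^{-1}\}$. This both re-derives the reciprocity $J(x)=J(x^{-1})$ from the analytic side and explains why $\phi$ and $1/\phi$ necessarily sit at a common cost, as observed just above.
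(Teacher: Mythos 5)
Your proof is correct and follows essentially the same route as the paper's: both clear denominators by multiplying through by $x_1x_2$ and arrive at the factorization $(x_1-x_2)(x_1x_2-1)=0$, from which the dichotomy follows by the zero-product property. The only difference is expository—you factor before clearing denominators and spell out the intermediate identity—so nothing further is needed.
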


\begin{proof}
From $x_1 + x_1^{-1} = x_2 + x_2^{-1}$, multiply by $x_1 x_2$ and factor to get $(x_1-x_2)(x_1 x_2-1)=0$. Therefore, either $x_1=x_2$ or $x_1 x_2=1$, which implies $x_1=1/x_2$.
\end{proof}

The quantity $J_{\text{bit}}=\ln\phi \approx 0.481$ is a convenient log-scale reference associated with this self-similar fixed point. In applications it can be interpreted as a characteristic scale for multiplicative deviations measured in log-coordinates. No claim is made here that $\phi$ (or $J_{\text{bit}}$) is forced without additional dynamical/self-similarity hypotheses beyond the cost axioms.

\subsection{Boundary divergence (Meta-Principle)}

The Meta-Principle---``Nothing cannot recognize itself''---is not assumed but 
\emph{derived} from the cost functional. Once $J$ is established by T5, the 
Meta-Principle becomes a derived boundary consequence:

\begin{definition}
Given sets \(A\) (recognizer) and \(B\) (recognized), a \emph{recognition event} 
is an ordered pair \((a,b)\in A\times B\). This represents the minimal relational 
structure assumed between recognizer and recognized. We write 
\(\Rec(A,B)=A\times B\) for the set of all recognition events. If either set is 
empty, then \(\Rec(A,B)=\emptyset\).
\end{definition}

\begin{remark}\label{rem:nothing-relation}
To connect the ratio model $x=a/b$ to the set-theoretic statement 
\begin{center}
    $\Rec(A,B)=A\times B$,
\end{center}
interpret $a$ and $b$ as nonnegative ``availability/size'' functionals of the underlying domains, e.g. $a:=\mu(A)$ and $b:=\mu(B)$, with $\mu(\emptyset)=0$ and $\mu(\cdot)>0$ for nonempty domains in the intended class.
With this identification, taking $x=a/b$ and holding $b>0$ fixed, the limit $x\to 0^+$ is precisely $\mu(A)\to 0^+$, i.e. the recognizer domain $A$ is depleted toward emptiness/zero-substrate.
In that regime there are no admissible recognition events, since $\mu(A)=0$ corresponds to $A=\emptyset$ (or ``no available elements''), hence $A\times B=\emptyset$ and in particular $\Rec(\emptyset,\emptyset)=\emptyset$.
\end{remark}

\begin{theorem}[T1: Boundary divergence (Meta-Principle)]
\emph{Nothing cannot recognize itself}: 
\begin{center}
    $\Rec(\emptyset,\emptyset)=\emptyset$.
\end{center}

This is a logical tautology (the Cartesian product of empty sets is empty). 
In the ratio model, the corresponding boundary statement is that approaching $x\to 0^+$ incurs infinite cost:
\begin{equation}
\lim_{x \to 0^+} J(x) = +\infty.
\end{equation}
In particular, the $x\to 0^+$ limit lies outside the finite-cost regime of the model.
\end{theorem}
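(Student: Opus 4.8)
The plan is to dispatch the two assertions separately, since they live at different levels of the framework. The first is purely set-theoretic and the second is an elementary analytic limit; the only genuine work is conceptual, namely ensuring that the bridge between them is coherent.

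First I would handle the set-theoretic claim $\Rec(\emptyset,\emptyset)=\emptyset$. By the definition $\Rec(A,B)=A\times B$, the statement reduces to the identity $\emptyset\times\emptyset=\emptyset$. Since a Cartesian product $A\times B$ contains no ordered pairs whenever either factor is empty, this is immediate and requires no further argument; it is the ``logical tautology'' flagged in the statement.

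Second, for the analytic boundary statement I would invoke the explicit closed form $J(x)=\tfrac12(x+x^{-1})-1$ furnished by Theorem~T5, rather than attempting to extract the divergence from the axioms A1--A3 directly. Writing $J(x)=\tfrac12 x+\tfrac1{2x}-1$, I observe that as $x\to 0^+$ the term $\tfrac12 x$ vanishes and the constant $-1$ is bounded, while $\tfrac1{2x}\to+\infty$. To make this rigorous by comparison, for $0<x<1$ I would use the lower bound
\[
J(x)=\tfrac12 x+\tfrac1{2x}-1\geq \tfrac1{2x}-1,
\]
obtained by discarding the nonnegative summand $\tfrac12 x$; since $\tfrac1{2x}-1\to+\infty$ as $x\to 0^+$, this forces $J(x)\to+\infty$.

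The main obstacle is not computational but interpretive: linking the combinatorial fact $\Rec(\emptyset,\emptyset)=\emptyset$ to the analytic limit relies on the measure-functional identification of Remark~\ref{rem:nothing-relation}, under which $x\to 0^+$ corresponds to recognizer depletion $\mu(A)\to 0^+$ (whence $A=\emptyset$ and $A\times B=\emptyset$). With that dictionary in place, the infinite-cost limit and the empty-recognition tautology become two descriptions of the same boundary regime, and by reciprocity $J(x)=J(x^{-1})$ the symmetric limit $x\to\infty$ diverges as well, so both extremes lie outside the finite-cost admissible regime of Definition~\ref{def:exists_finite}.
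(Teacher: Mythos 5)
Your proposal is correct and follows essentially the same route as the paper's proof: both treat the set-theoretic identity $\Rec(\emptyset,\emptyset)=\emptyset\times\emptyset=\emptyset$ as immediate, invoke the closed form of $J$ from Theorem~T5, establish $\lim_{x\to 0^+}J(x)=+\infty$ by term-by-term analysis of $\tfrac{x}{2}+\tfrac{1}{2x}-1$, and connect the two statements through Remark~\ref{rem:nothing-relation}. Your comparison bound $J(x)\ge\tfrac{1}{2x}-1$ for $0<x<1$ is a minor rigor refinement of the same argument, not a different approach.
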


\begin{proof}
From the uniqueness theorem (T5), $J(x) = \frac{1}{2}(x + x^{-1}) - 1$. Now, rewrite $J$ as
\begin{equation*}
    J(x)=\frac{x}{2}+\frac{1}{2x}-1.
\end{equation*}
Then, as $x\to0^+$, each term tends to 
\begin{equation*}
    \frac{1}{2x}\to+\infty,\qquad \frac{x}{2}\to 0,\qquad -1\to-1,
\end{equation*}
so the sum diverges to $+\infty$:
\begin{equation*}
  \lim_{x\to0^+}J(x)=+\infty.  
\end{equation*}
Therefore, by Remark \ref{rem:nothing-relation}, any configuration approaching ``Nothing'' has unbounded cost.
\end{proof}
The Meta-Principle is thus no longer a mysterious pre-logical axiom, but rather a
\emph{derived consistency constraint}: the formalism assigns infinite cost to the limiting regime \(x \to 0^{+}\) (with \(J(x) \to \infty\)), thereby excluding ``nothingness'' configurations from the domain of finite-cost recognition dynamics.

\subsection{Single-event updates and atomic ticks}\label{sec:ledger-axioms}

The atomic-tick principle emerges from the requirement to record recognition events unambiguously. We model recognition dynamics through a \emph{ledger}: a sequential record updated once per tick. When a spatial carrier is assumed later in the paper, we use hypercubic graphs $Q_d$ as a convenient family for analyzing scheduling and coverage constraints; the arguments below do not depend on a specific embedding of the ledger in $\mathbb{R}^d$.

To derive atomicity, we must specify the minimal structural constraints on how the ledger records recognition events. We introduce the following axioms:

\textbf{Axiom L1 (Deterministic State-Update Semantics).}
The ledger state $S_t$ at tick $t$ evolves deterministically according to a function $U: \mathcal{S} \times \mathcal{E}^* \to \mathcal{S}$, where $\mathcal{S}$ is the state space, $\mathcal{E}$ is the set of recognition events, and $\mathcal{E}^*$ denotes finite sequences of events (including the empty sequence). The state update rule is:
\begin{equation}
    S_{t+1} = U(S_t, \sigma_t),
\end{equation}
where $\sigma_t \in \mathcal{E}^*$ is a finite sequence of recognition events at tick $t$ (possibly empty, possibly containing multiple events). The function $U$ is deterministic: for fixed $S_t$ and $\sigma_t$, the resulting state $S_{t+1}$ is uniquely determined.

\textbf{Axiom L2 (Minimality of Ledger Structure).}
The ledger records only final states at each tick: $S_t$ contains no event-ordering metadata beyond the tick index itself. Equivalently, the ledger does not retain any intra-tick permutation information: if multiple events were processed within a tick, unambiguous recording would require $U(S,\sigma)$ to depend only on the \emph{unordered} content of $\sigma$ (i.e.\ to be invariant under permutations of $\sigma$). Recognition events do not commute in general---there exist recognition sequences $\sigma, \sigma' \in \mathcal{E}^*$ such that $U(S, \sigma) \neq U(S, \sigma')$ even when $\sigma$ and $\sigma'$ contain the same events in different orders.

\subsubsection{Ledger semantics: events, postings, and state updates}
To make subsequent statements (atomic ticks, quantized postings, and cycle flux) mathematically precise, we adopt the following minimal ledger semantics.

\begin{definition}[Recognition structure]\label{def:recognition_structure}
Fix a directed graph $G=(X,E)$, where $X$ is the set of nodes and $E\subseteq X\times X$ is the set of directed edges. We assume $E$ is closed under reversal: if $(u\!\to\!v)\in E$ then $(v\!\to\!u)\in E$.
\end{definition}

\begin{definition}[Ledger state as balances]
Fix a nonzero increment $\delta>0$. A ledger state at tick $t$ is a balance function
\[
S_t \equiv b_t : X \longrightarrow \delta\mathbb{Z}.
\]
We write the total balance as
\[
\mathcal{B}(S_t)\;:=\;\sum_{x\in X} b_t(x),
\]
assuming this sum is well-defined (e.g.\ $X$ finite, or $b_t$ has finite support).
\end{definition}

\begin{definition}[Event-to-posting map]
By Theorem T2, at most one recognition event occurs per tick. Therefore, we can write the state update as $S_{t+1}=U(S_t,e_t)$ where $e_t\in E$ is a single oriented edge (or the empty event if no recognition occurs). Given this deterministic update rule, define the induced \emph{node postings} (balance increments) as
\[
\Post(S_t,e_t)(x)\;:=\; \bigl(U(S_t,e_t)\bigr)(x)\;-\;S_t(x)\in \delta\mathbb{Z}.
\]
Equivalently, $S_{t+1}=S_t+\Post(S_t,e_t)$ as functions $X\to\delta\mathbb{Z}$.
\end{definition}

\begin{definition}[Edge postings and cycle flux]\label{def:edge-postings}
Under the pairwise-locality assumption introduced below (so that only the endpoints $(u,v)$ can change at tick $t$), the posting is determined by a single magnitude
\[
\Delta_t \;:=\; \Post(S_t,e_t)(v)\;=\;-\Post(S_t,e_t)(u)\in \delta\mathbb{Z}.
\]
We then define the corresponding \emph{per-tick edge increment} $\delta\Delta(\cdot,t):E\to\delta\mathbb{Z}$ by
\[
\delta\Delta(u\!\to\!v,t)=\Delta_t,\qquad \delta\Delta(v\!\to\!u,t)=-\Delta_t,\qquad \delta\Delta(e,t)=0 \text{ for all other } e\in E.
\]
This $\delta\Delta(\cdot,t)$ is a \emph{sparse} $1$-cochain (supported on at most two directed edges) induced by the single atomic event at tick $t$.

Fix an integer clearing horizon $W\ge 1$. For any start time $t_0$, define the \emph{cumulative edge flow} over the window $[t_0,t_0+W)$ by
\[
\overline{\Delta}_{t_0,W}(e)\;:=\;\sum_{\tau=t_0}^{t_0+W-1}\delta\Delta(e,\tau)\in\delta\mathbb{Z}.
\]
For any directed cycle $\gamma=(e_1,\dots,e_n)$ in $G$, define the \emph{cumulative cycle flux}
\[
\overline{\Phi}(\gamma; t_0,W)\;:=\;\sum_{i=1}^n \overline{\Delta}_{t_0,W}(e_i).
\]
\end{definition}

\begin{remark}[Relation to more general flow formalisms]
More general finitary flow formalisms (local finiteness, inflow/outflow sums, and ``closed-chain sums'') can be used to relate event structure to conservation and exactness on large or infinite graphs. The definitions above are the minimal finite-support specialization needed for the present manuscript.
\end{remark}

We now state the main result:

\begin{theorem}[T2: Atomic tick]
Under Axioms L1 and L2, at most one recognition event is processed per tick. There are no concurrent recognitions.
\end{theorem}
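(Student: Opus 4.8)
The plan is to derive the theorem as a reductio from the two halves of Axiom~L2, which become jointly inconsistent exactly when a tick carries more than one event. First I would suppose, for contradiction, that some tick $t$ processes a sequence $\sigma_t\in\mathcal{E}^*$ with $|\sigma_t|\ge 2$. By Axiom~L1 the update $U:\mathcal{S}\times\mathcal{E}^*\to\mathcal{S}$ is a single fixed deterministic map, and by the minimality clause of L2 the ledger retains no intra-tick ordering information, so the recorded transition $S_t\mapsto S_{t+1}$ cannot distinguish $\sigma_t$ from any reordering $\sigma_t'$ of the same events. For this record to be unambiguous---i.e.\ for $S_{t+1}$ to be a well-defined function of the data the ledger actually keeps---one must have $U(S_t,\sigma_t)=U(S_t,\sigma_t')$ for every permutation $\sigma_t'$. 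Hence admitting any multi-event tick forces $U$ to be permutation-invariant on the length-$\ge 2$ inputs that occur.

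The contradiction then follows from the non-commutativity clause of L2: there exist $S,\sigma,\sigma'$ with the same multiset of events but $U(S,\sigma)\ne U(S,\sigma')$. Such a pair is precisely a multi-event input violating permutation-invariance, contradicting the requirement just derived. Therefore no tick can carry two or more events, so $|\sigma_t|\le 1$ for all $t$: at most one recognition event per tick, with the empty (vacuous) tick still permitted. Equivalently, under atomicity $U$ is only ever evaluated on length-$\le 1$ sequences, where ordering is vacuous, so the non-commutativity clause never bears on the actual dynamics.

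The hard part---and the step I would write most carefully---is bridging the \emph{existential} non-commutativity clause and the \emph{universal} conclusion. A skeptic could propose ``restricted'' multi-event ticks in which only mutually commuting events are ever batched, so that the single non-commuting pair guaranteed by L2 never co-occurs in one tick; this would evade the contradiction above. To rule this out I would appeal once more to minimality: deciding which events may legitimately share a tick requires a commutation-compatibility labeling on $\mathcal{E}$, and maintaining such a labeling is itself intra-tick grouping metadata of exactly the kind L2 forbids. I would therefore make the minimality reading of L2 explicit---that the admissible multi-event inputs are closed under permutation and carry no commutation-class structure---so that permitting multi-event ticks at all reinstates permutation-invariance on all such inputs and the existential clause genuinely bites. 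With that clarification recorded as part of L2, the reductio is airtight.
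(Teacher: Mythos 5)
Your proof is correct and follows essentially the same route as the paper's: assume a multi-event tick, argue via L1 and the minimality clause of L2 that the update must then be permutation-invariant on intra-tick content, and derive a contradiction from the non-commutativity clause of L2. Your closing paragraph on ``restricted batching'' of mutually commuting events actually addresses a quantifier gap (existential non-commutativity versus universal permutation-invariance) that the paper's own proof passes over silently, so your version is, if anything, the more careful one.
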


\begin{proof}
Suppose, for contradiction, that a single tick $t$ may process a sequence $\sigma_t\in\mathcal{E}^*$ containing at least two events.

By Axiom L1, the post-tick state is $S_{t+1}=U(S_t,\sigma_t)$. By Axiom L2 (minimality), the ledger carries no intra-tick ordering metadata beyond the tick index. Therefore, if multi-event ticks are allowed, the update rule must be \emph{well-defined on the unordered content of a tick}: for every state $S$ and every finite event sequence $\sigma$, permuting the order within the tick cannot change the resulting recorded state. Formally, for every permutation $\pi$ of the events in $\sigma$ we must have
\[
U(S,\sigma)=U(S,\pi(\sigma)).
\]

However, Axiom L2 also asserts that events do not commute in general: there exist a state $S$ and two sequences $\sigma,\sigma'$ containing the same events in different orders such that $U(S,\sigma)\neq U(S,\sigma')$. This contradicts the permutation-invariance required for an unambiguous ledger update when multiple events are permitted within a tick.

Therefore, to satisfy deterministic update semantics without intra-tick ordering metadata, each tick must contain at most one event. This proves atomicity.
\end{proof}

Theorem T2 establishes discrete temporal order: time advances in atomic steps. The proof shows that atomicity is a necessary consequence of the minimality constraint (Axiom L2) when combined with the fact that events do not commute in general. As a corollary, we can restrict the domain of $U$ to $\mathcal{S} \times \mathcal{E}$ (single events) rather than $\mathcal{S} \times \mathcal{E}^*$ (sequences), since sequences of length greater than one are excluded by T2.

\subsubsection{Balanced postings (double-entry)}

Theorem T2 establishes \emph{atomicity} but not the posting \emph{structure}. We now show that, under explicit structural assumptions (conservation, no sources/sinks, and pairwise-local events), each recognition event must be recorded as a balanced debit--credit pair (double-entry). The reciprocity property $J(x)=J(x^{-1})$ is then naturally compatible with reversing the orientation of an event without changing its cost.

\textbf{Structural Assumption: Conservation Principle.}
The total ledger balance is invariant at each tick: if $\mathcal{B}(S_t)$ denotes the total balance (sum over all nodes) of state $S_t$, then $\mathcal{B}(S_{t+1}) = \mathcal{B}(S_t)$ for all $t$.

\textbf{Structural Assumption: No External Sources or Sinks.}
Postings are the only state-changing operations. There are no auxiliary fields, external flows, or hidden variables that can absorb or supply balance.

\textbf{Structural Assumption: Pairwise locality of events.}
Each recognition event $e_t$ designates an ordered pair of nodes $(u,v)$, and the update $S_{t+1}=U(S_t,e_t)$ can change balances only at those two nodes.

\begin{remark}
\textit{Why pairwise locality is required for ``exactly two postings''.}
Conservation alone implies only that the \emph{net} change in total balance per tick is zero.
Without a locality condition, the deterministic update rule $U(S_t,e_t)$ could (in principle) redistribute balance across many nodes in response to a single event input $e_t$, while still preserving the total sum.
Therefore, the ``exactly two postings'' conclusion requires a modeling commitment that a recognition event is \emph{pairwise} at the ledger level (an ordered pair of nodes, i.e.\ a directed edge), so that only the event's endpoints may change on that tick.
\end{remark}

\begin{proposition}[Double-entry constraint]
Under the following assumptions:
\begin{enumerate}
    \item Atomicity: At most one recognition event per tick (Theorem T2)
    \item Conservation: Total balance is invariant per tick
    \item No external sources/sinks: Postings are the only balance-changing operations
    \item Self-contained state updates: The state $S_{t+1}$ depends only on $S_t$ and the recognition event $e_t$ (Axiom L1, with atomicity from T2)
    \item Pairwise locality of events (structural assumption above)
\end{enumerate}
each recognition event must be self-balancing: it records exactly two postings of equal magnitude and opposite sign on the participating nodes, $+\Delta_t$ and $-\Delta_t$ (credit and debit). If postings are quantized in $\delta\mathbb{Z}$, then $\Delta_t\in\delta\mathbb{Z}$.
\end{proposition}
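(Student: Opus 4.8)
The plan is to treat this as a localized bookkeeping identity: fix a single tick, use pairwise locality to confine the posting increments to two nodes, and then use conservation (together with the no-sources/sinks clause) to pin down the relationship between them. First I would fix a tick $t$ and invoke Theorem~T2 to reduce $\sigma_t$ to a single event $e_t$ (the empty-event case being trivial); by Axiom~L1 the updated state $S_{t+1}=U(S_t,e_t)$ is uniquely determined, so the node postings $\Post(S_t,e_t)(x)=S_{t+1}(x)-S_t(x)$ are well-defined and lie in $\delta\mathbb{Z}$.

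Next I would apply pairwise locality: since $e_t$ designates an ordered pair $(u,v)$ and the update can change balances only at $u$ and $v$, we have $\Post(S_t,e_t)(x)=0$ for every $x\notin\{u,v\}$. Writing $p_u:=\Post(S_t,e_t)(u)$ and $p_v:=\Post(S_t,e_t)(v)$, these are the only possibly-nonzero postings. I would then compute the net change in total balance directly from its definition,
\[
\mathcal{B}(S_{t+1})-\mathcal{B}(S_t)=\sum_{x\in X}\Post(S_t,e_t)(x)=p_u+p_v,
\]
where the no-sources/no-sinks assumption is exactly what guarantees that $\mathcal{B}$ is the node-balance sum and nothing more (no hidden reservoir absorbs or supplies balance), so that this telescoping identity carries the genuine content of conservation. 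Invoking the conservation principle $\mathcal{B}(S_{t+1})=\mathcal{B}(S_t)$ then forces $p_u+p_v=0$, i.e.\ $p_u=-p_v$. Setting $\Delta_t:=p_v$ yields postings $+\Delta_t$ on $v$ and $-\Delta_t$ on $u$: equal magnitude, opposite sign. For a nontrivial event with distinct endpoints $u\neq v$ and $\Delta_t\neq 0$ these are exactly two nonzero postings, establishing the double-entry form, and quantization is immediate since $p_u,p_v\in\delta\mathbb{Z}$ gives $\Delta_t\in\delta\mathbb{Z}$.

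The argument has no hard technical obstacle; its only delicate point is attributing each conclusion to the correct hypothesis, and in particular isolating the exact role of pairwise locality. As the preceding remark already emphasizes, conservation on its own yields only the single scalar constraint $\sum_{x}\Post(S_t,e_t)(x)=0$, which on a large graph is consistent with a deterministic rule that redistributes balance across arbitrarily many nodes while still preserving the total. It is precisely pairwise locality that collapses this sum to the two-term relation $p_u+p_v=0$ and thereby upgrades ``net-zero redistribution'' to a matched debit--credit pair. I would therefore present locality not as a convenience but as the essential ingredient that turns conservation into double-entry, with conservation fixing the sign and magnitude relation and quantization transferring the discreteness of balances to the posting magnitude $\Delta_t$.
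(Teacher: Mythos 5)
Your proof is correct and follows essentially the same route as the paper's: fix a tick, use pairwise locality to confine nonzero postings to the two endpoints $u,v$, then use conservation (with no external sources/sinks) to force $p_u+p_v=0$, and read off quantization from $\delta\mathbb{Z}$-valued balances. Your closing discussion of why locality is the indispensable ingredient matches the remark the paper places just before the proposition, so nothing is missing or divergent.
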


\begin{proof}
Fix a tick $t$ and suppose a recognition event $e_t$ occurs (if no event occurs, the claim is vacuous). By pairwise locality, only two node balances can change in passing from $S_t$ to $S_{t+1}$; call them $u$ and $v$. By conservation, the net change in total balance is zero, so the balance change at $u$ must be the negative of the balance change at $v$. Writing the change magnitude as $\Delta_t$, the event therefore records exactly two opposite postings, $-\Delta_t$ at $u$ and $+\Delta_t$ at $v$.
\end{proof}

Therefore, under the explicitly stated structural assumptions, double-entry accounting (balanced debit--credit pairs) is required. When we additionally assume quantization in $\delta\mathbb{Z}$ (Section~\ref{sec:scope}), each event records $+\Delta_t$ and $-\Delta_t$ with $\Delta_t\in\delta\mathbb{Z}$. The reciprocity property $J(x)=J(x^{-1})$ supports treating the reversed event $(v,u)$ as the same-cost counterpart of $(u,v)$.

To illustrate the preceding ideas, we recall the definition of a \textit{recognition structure} (Definition~\ref{def:recognition_structure}) and consider the following example:

\begin{example}[Recognition structure]
\label{ex:recognition-structure}
Consider a recognition structure $G=(X,E)$ with four nodes $X=\{a,b,c,d\}$ and directed edges representing recognition relations:
\begin{itemize}
    \item $a \to b$: node $a$ recognizes node $b$
    \item $b \to c$: node $b$ recognizes node $c$
    \item $c \to d$: node $c$ recognizes node $d$
    \item $d \to a$: node $d$ recognizes node $a$
    \item $a \to c$: node $a$ recognizes node $c$
\end{itemize}
This forms a directed graph with a cycle $(a \to b \to c \to d \to a)$ and an additional edge $(a \to c)$ creating a shortcut. At each tick $t$, the ledger assigns a (generally sparse) \emph{per-tick edge increment} $\delta\Delta(e,t) \in \delta\mathbb{Z}$ to each directed edge $e$. For instance, at tick $t=1$, we might have:
\begin{align*}
    \delta\Delta(a \to b, 1) &= +\delta, \\
    \delta\Delta(b \to a, 1) &= -\delta, \\
    \delta\Delta(e, 1) &= 0 \quad \text{for all other edges } e\in E.
\end{align*}
The double-entry rule ensures that for each recognition event, if the event induces a $+\delta$ increment along $(x \to y)$, then the node balances record a matched debit/credit pair, maintaining total-balance conservation.
\end{example}

\subsection{Quantized posting units (\texorpdfstring{\bm{$\delta$}}{delta})}

The atomic tick structure with double-entry raises a fundamental question: what is the minimal unit $\delta$? In this manuscript, \emph{quantization} (postings taking values in $\delta\mathbb{Z}$ for some $\delta>0$, together with torsion-free uniqueness of representation) is treated as an explicit structural assumption of the discrete ledger model (Section~\ref{sec:scope}). Proposition T8 records the resulting algebraic consequences.

\subsubsection{Quantization as a discreteness assumption}

Assume postings take values in $\delta\mathbb{Z}$ for some $\delta>0$ and the ledger has no torsion (unique integer representation). The following proposition records the algebraic structure that follows from this quantization assumption:

\begin{proposition}[T8: Ledger Units]
Under the quantization assumption (postings in $\delta\mathbb{Z}$ with $\delta > 0$ and no torsion), the set of all ledger increments
\[
\Delta = \{k \delta \mid k \in \mathbb{Z}\}
\]
forms an infinite cyclic additive group $(\Delta, +)$ isomorphic to $\mathbb{Z}$ under the mapping $k \mapsto k\delta$. 

Moreover, all ledger values are integer multiples of $\delta$:
\[
x = n \delta, \qquad n \in \mathbb{Z}
\]
with unique representation (quantization).
\end{proposition}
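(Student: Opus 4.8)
The plan is to exhibit the explicit map $\varphi:\mathbb{Z}\to\Delta$ given by $\varphi(k)=k\delta$ and verify directly that it is a group isomorphism, from which both the cyclic-group claim and the unique-representation claim follow immediately. First I would confirm that $(\Delta,+)$ is indeed an abelian group: closure holds since $k\delta+k'\delta=(k+k')\delta\in\Delta$; the identity is $0=0\cdot\delta$; the inverse of $k\delta$ is $(-k)\delta\in\Delta$; and associativity and commutativity are inherited from the ambient additive structure of $\mathbb{R}$. This establishes in passing that $\Delta$ is a subgroup of $(\mathbb{R},+)$.

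Next I would check that $\varphi$ is a homomorphism and a bijection. The homomorphism property is the one-line computation $\varphi(k+k')=(k+k')\delta=k\delta+k'\delta=\varphi(k)+\varphi(k')$. Surjectivity is immediate from the definition $\Delta=\{k\delta:k\in\mathbb{Z}\}$. Injectivity is where the hypotheses enter: if $\varphi(k)=\varphi(k')$, then $(k-k')\delta=0$, and the no-torsion assumption (equivalently, that $\delta>0$ in $\mathbb{R}$) forces $k-k'=0$, hence $k=k'$. A bijective homomorphism is an isomorphism, so $\Delta\cong\mathbb{Z}$; since $\mathbb{Z}$ is infinite cyclic with generator $1$, its image $\Delta$ is infinite cyclic with generator $\delta=\varphi(1)$.

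For the final clause, the membership of every ledger value in $\delta\mathbb{Z}$ is exactly the quantization assumption, so each value is of the form $x=n\delta$ for some $n\in\mathbb{Z}$; uniqueness of this $n$ is precisely the injectivity of $\varphi$ established above. I expect the only conceptual point worth flagging is the role of the torsion-free hypothesis: in a general abelian group the equation $k\delta=k'\delta$ need not imply $k=k'$ (as the example $\mathbb{Z}/n\mathbb{Z}$ shows), so it is this assumption---and not merely $\delta\neq 0$ as a formal symbol---that guarantees unique representation and pins down the isomorphism type as $\mathbb{Z}$ rather than as a finite cyclic quotient. Beyond this observation, the argument is a routine verification with no genuine obstacle.
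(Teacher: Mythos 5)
Your proof is correct and takes essentially the same route as the paper's: both exhibit the map $k \mapsto k\delta$, establish injectivity from the no-torsion hypothesis and surjectivity by construction, and read off unique representation from injectivity. Your additional explicit verification of the group axioms and your closing remark distinguishing the torsion-free hypothesis from the formal condition $\delta \neq 0$ (via the $\mathbb{Z}/n\mathbb{Z}$ example) are sound refinements of the same argument, not a different approach.
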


\begin{proof}
Formally, consider the set of all possible ledger increments:
\[
\Delta = \{k\delta \mid k \in \mathbb{Z}\}.
\]
This set forms an additive group under the natural addition operation. The mapping $k \mapsto k\delta$ provides a group homomorphism from $\mathbb{Z}$ to $\Delta$.

Since $\delta \neq 0$ and the ledger structure has no torsion (no finite-order elements other than zero), this homomorphism is injective. Moreover, by construction, every element of $\Delta$ is of the form $k\delta$ for some $k \in \mathbb{Z}$, so the homomorphism is surjective. Therefore, $(\Delta, +) \simeq \mathbb{Z}$ as additive groups.

The uniqueness of representation follows from the absence of torsion: if $k_1\delta = k_2\delta$ for $k_1, k_2 \in \mathbb{Z}$, then $(k_1 - k_2)\delta = 0$. Since $\delta \neq 0$ and there is no torsion, we must have $k_1 = k_2$. Therefore, every ledger value $x$ has a unique representation as $x = n\delta$ for some $n \in \mathbb{Z}$.

This quantization is an explicit structural assumption of the discrete ledger model. The algebraic structure ensures that all ledger operations occur in discrete, countable units, providing a foundation for conservation-style statements.
\end{proof}

Notice that the algebraic structure $(\Delta, +) \simeq \mathbb{Z}$ forbids fractional ledger amounts: every recognition event posts exactly $\pm\delta$ (or integer multiples thereof), all balances are integer multiples of $\delta$, and the isomorphism ensures each amount has a unique integer representation. Each ledger step corresponds to one unit of recognition, guaranteeing unique integer counts for all ledger states. The ledger's arithmetic parallels, at a structural level, how certain physical quantities (e.g., electric charge) appear in discrete units.

\subsection{Cycle flux conservation}

Double-entry and quantization constrain how values can be posted, but they do not by themselves force vanishing \emph{circulation} around every graph cycle at \emph{each atomic tick}. To obtain a scalar-potential representation while retaining sparse single-edge events, we therefore impose an additional \emph{time-aggregated} cycle-closure (path-independence / no-arbitrage after clearing) assumption.

\textbf{Structural Assumption: Time-aggregated cycle closure (clearing).}
Fix a clearing horizon $W\ge 1$. For each clearing window $[t_0,t_0+W)$ and every directed cycle $\gamma$ in the recognition graph, the \emph{cumulative} cycle flux is zero:
\[
\overline{\Phi}(\gamma; t_0,W)=0.
\]

\textbf{Physical interpretation: impulses, circulation, and clearing}\label{rem:clearing-interpretation}
At the tick scale, a recognition event is modeled as an \emph{impulse} that posts on a single edge (and its reverse). On graphs with cycles, such impulses generically create transient circulation if one measures flux around a loop at a single tick. The clearing assumption says that the physically/operationally meaningful constraint is not ``no circulation at every microstep'', but rather ``no net circulation after a clearing time'': loop imbalances may occur transiently but must cancel over a clearing horizon.

This is the discrete analogue of many familiar situations where \emph{local conservation is microscopic} but \emph{compatibility constraints are macroscopic}: node-balance conservation is enforced at each tick (double-entry), while loop constraints (no-arbitrage / path-independence) are enforced only after netting. Physically, this parallels systems in which transient circulating currents may exist at short times, but dissipation/relaxation drives the \emph{net} (time-aggregated) flow toward an irrotational, potential-driven form.

\textbf{Deriving cycle closure in richer settings}
In more detailed formalisms, vanishing closed-chain sums (``exactness'') can be derived from additional structural hypotheses (for example, well-foundedness / absence of directed cycles in a suitable evolution relation). In this manuscript, we surface the needed hypothesis directly as the explicit time-aggregated cycle-closure (clearing) assumption above.

Before stating Theorem T3, we formalize the notion of (time-aggregated) path-independence. For a fixed clearing window $[t_0,t_0+W)$, let $P_{x\to y}$ denote a directed path from node $x$ to node $y$ in the recognition graph. Define the cumulative path sum
\[
\overline{\Phi}(P_{x\to y}; t_0,W) := \sum_{e \in P_{x\to y}} \overline{\Delta}_{t_0,W}(e),
\]
where the sum is taken over edges in the order they appear along the path.

\begin{definition}[Path-independence]
The cumulative edge flow $\overline{\Delta}_{t_0,W}(\cdot)$ is \emph{path-independent} if for any two nodes $x, y$ in the same connected component and any two directed paths $P_{x\to y}$ and $P'_{x\to y}$ from $x$ to $y$, we have $\overline{\Phi}(P_{x\to y}; t_0,W) = \overline{\Phi}(P'_{x\to y}; t_0,W)$.
\end{definition}

\begin{theorem}[T3: Equivalence of time-aggregated cycle closure and path-independence]
Fix a clearing window $[t_0,t_0+W)$ and assume the recognition structure is connected. The following are equivalent:
\begin{enumerate}
    \item \textbf{(Time-aggregated) cycle closure:} For every directed cycle $\gamma$, $\overline{\Phi}(\gamma; t_0,W) = 0$.
    \item \textbf{(Time-aggregated) path-independence:} For any nodes $x, y$ and any two directed paths $P_{x\to y}$, $P'_{x\to y}$ from $x$ to $y$, we have $\overline{\Phi}(P_{x\to y}; t_0,W) = \overline{\Phi}(P'_{x\to y}; t_0,W)$.
\end{enumerate}
\end{theorem}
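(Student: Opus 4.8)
The plan is to exploit the reversal antisymmetry of the cumulative edge flow as the single structural fact driving both implications. Throughout I fix the clearing window and suppress the subscripts $t_0,W$, writing $\overline{\Delta}(e)$ and $\overline{\Phi}(\cdot)$. The first thing I would record is that summing the per-tick antisymmetry $\delta\Delta(v\!\to\!u,\tau)=-\delta\Delta(u\!\to\!v,\tau)$ from Definition~\ref{def:edge-postings} over $\tau\in[t_0,t_0+W)$ yields $\overline{\Delta}(v\!\to\!u)=-\overline{\Delta}(u\!\to\!v)$ for every edge. Since $\overline{\Phi}$ of any directed path or closed walk is by definition the sum of $\overline{\Delta}$ over its edges, this antisymmetry immediately gives two facts I will use repeatedly: (a) reversing a path negates its flux, $\overline{\Phi}(\overline{P})=-\overline{\Phi}(P)$, where $\overline{P}$ denotes the edge-reversed path; and (b) an immediate backtrack $u\!\to\!v\!\to\!u$ contributes zero.

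For cycle closure $\Rightarrow$ path-independence, given two directed paths $P,P'$ from $x$ to $y$, I would form the closed walk $C=P\ast\overline{P'}$ (follow $P$, then the reverse of $P'$), which returns to $x$. By additivity of the edge-sum and fact (a), $\overline{\Phi}(C)=\overline{\Phi}(P)-\overline{\Phi}(P')$, so it suffices to show $\overline{\Phi}(C)=0$. The hypothesis supplies vanishing flux only on \emph{simple} directed cycles, whereas $C$ may repeat vertices; the key step is therefore to decompose $C$ into simple directed cycles. I would peel off cycles iteratively: tracing $C$, at the first repeated vertex I excise the simple directed cycle it closes (flux zero by hypothesis), delete those edges, and continue on the shortened closed walk, while immediate reversals $u\!\to\!v\!\to\!u$ are cancelled via fact (b). Since each excision strictly shortens $C$, the process terminates and expresses $\overline{\Phi}(C)$ as a finite sum of simple-cycle fluxes and backtrack contributions, all zero; hence $\overline{\Phi}(P)=\overline{\Phi}(P')$.

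For path-independence $\Rightarrow$ cycle closure, let $\gamma=(x\!\to\!x_1\!\to\!\cdots\!\to\!x_{n-1}\!\to\!x)$ be a directed cycle and set $w=x_{n-1}$. Let $P$ be its initial segment from $x$ to $w$, so that $\overline{\Phi}(\gamma)=\overline{\Phi}(P)+\overline{\Delta}(w\!\to\!x)$. Because $E$ is closed under reversal (Definition~\ref{def:recognition_structure}), the single edge $x\!\to\!w$ exists and forms a one-edge path $P'$ from $x$ to $w$ with $\overline{\Phi}(P')=\overline{\Delta}(x\!\to\!w)=-\overline{\Delta}(w\!\to\!x)$ by fact (a). Applying path-independence to the pair $P,P'$ gives $\overline{\Phi}(P)=\overline{\Phi}(P')=-\overline{\Delta}(w\!\to\!x)$, whence $\overline{\Phi}(\gamma)=\overline{\Phi}(P)+\overline{\Delta}(w\!\to\!x)=0$.

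The main obstacle is the decomposition step in the forward direction: the hypothesis is phrased for simple cycles, but the natural composite $P\ast\overline{P'}$ is only a closed walk, so I must argue carefully that its flux reduces to a sum of simple-cycle fluxes. The antisymmetry in fact (b) is precisely what lets me discard the degenerate backtracking a closed walk can contain; once those pairs are removed, the remaining walk decomposes into simple directed cycles by the standard peeling argument, and each contributes zero. The reverse direction is comparatively routine, its only subtlety being the appeal to reversal-closedness of $E$ to guarantee the comparison edge $x\!\to\!w$ exists.
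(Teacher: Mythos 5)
Your argument is correct, and its skeleton largely coincides with the paper's: both directions hinge on the antisymmetry of $\overline{\Delta}_{t_0,W}$ under edge reversal, and the forward implication in both treatments forms the closed walk $P \ast \overline{P'}$ and reduces $\overline{\Phi}(P)-\overline{\Phi}(P')$ to a sum of cycle fluxes. The differences are in explicitness and in the choice of comparison path. For the forward direction the paper simply asserts that the concatenated closed walk ``decomposes into directed cycles''; you supply the missing detail via the peeling argument (excise a simple directed cycle at the first repeated vertex, cancel backtracks by antisymmetry), which strengthens the write-up --- though note that a backtrack $u\to v\to u$ is itself a directed $2$-cycle, so under the theorem's literal hypothesis (``every directed cycle'') your fact (b) is subsumed by the cycle-closure assumption and is only needed if one restricts that hypothesis to longer simple cycles. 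For the reverse direction the paper compares the cycle $\gamma$, viewed as a path from $v_0$ to itself, against the \emph{empty} path at $v_0$, whose flux is zero; you instead compare the initial segment of $\gamma$ against the one-edge path along the reversed final edge. Your route avoids any question of whether a zero-length path qualifies as a ``directed path'' in the path-independence definition, at the price of invoking reversal-closedness of $E$ (which Definition~\ref{def:recognition_structure} does guarantee); the paper's route is shorter but leans on admitting the trivial path. Both are valid under the paper's definitions, so the two proofs are interchangeable.
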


\begin{proof}
\textbf{(1) $\Rightarrow$ (2):} Suppose time-aggregated cycle closure holds for $\overline{\Delta}_{t_0,W}$. Fix nodes $x, y$ and two directed paths $P_{x\to y}$ and $P'_{x\to y}$ from $x$ to $y$. Consider the closed walk formed by following $P_{x\to y}$ from $x$ to $y$, then following $P'_{x\to y}$ in reverse from $y$ back to $x$.

Let $\overline{P'_{x\to y}}$ denote the reverse of $P'_{x\to y}$ (same edges, reversed order and orientation). Since each per-tick increment $\delta\Delta(\cdot,t)$ is antisymmetric by construction, the cumulative flow $\overline{\Delta}_{t_0,W}$ is also antisymmetric: $\overline{\Delta}_{t_0,W}(v\!\to\!u)=-\overline{\Delta}_{t_0,W}(u\!\to\!v)$. Therefore, $\overline{\Phi}(\overline{P'_{x\to y}}; t_0,W) = -\overline{\Phi}(P'_{x\to y}; t_0,W)$.

The concatenation $\gamma = P_{x\to y} \circ \overline{P'_{x\to y}}$ is a closed walk and decomposes into directed cycles. By time-aggregated cycle closure, each directed cycle has zero cumulative flux, hence
\[
0=\overline{\Phi}(\gamma;t_0,W)=\overline{\Phi}(P_{x\to y};t_0,W)-\overline{\Phi}(P'_{x\to y};t_0,W),
\]
which implies path-independence.

\textbf{(2) $\Rightarrow$ (1):} Suppose time-aggregated path-independence holds. Let $\gamma = (v_0\to v_1\to \cdots \to v_n=v_0)$ be a directed cycle. Consider the path $P_{v_0\to v_0}$ that follows $\gamma$ once and the trivial path $P'_{v_0\to v_0}$ with no edges. Then $\overline{\Phi}(P'_{v_0\to v_0};t_0,W)=0$. By path-independence,
\[
\overline{\Phi}(P_{v_0\to v_0};t_0,W)=\overline{\Phi}(P'_{v_0\to v_0};t_0,W)=0,
\]
and since $P_{v_0\to v_0}$ is $\gamma$, we conclude $\overline{\Phi}(\gamma;t_0,W)=0$.
\end{proof}

Theorem T3 establishes that \emph{time-aggregated} cycle closure is equivalent to \emph{time-aggregated} path-independence, i.e.\ that the cleared cumulative flow is ``curl-free''. In continuum settings, curl-free vector fields are precisely those that admit scalar potentials, and the discrete version of this connection is made explicit in Theorem T4.

\begin{remark}[Heuristic continuum limit]
With additional scaling assumptions and a chosen continuum embedding, node-level conservation can be related to a continuity equation of the schematic form
\[
   \frac{\partial \rho}{\partial t} + \nabla \cdot \mathbf{J} = 0.
\]
This remark is interpretive and not used in the discrete derivations.
\end{remark}

\begin{example}[Cycle Flux Conservation (after clearing)]
\label{ex:cycle-flux}
Consider the cycle $\gamma = (a \to b \to c \to d \to a)$ from Example~\ref{ex:recognition-structure}. Over a fixed clearing window $[t_0,t_0+W)$, suppose the cumulative edge flow is:
\begin{align*}
    \overline{\Delta}_{t_0,W}(a \to b) &= +2\delta, \\
    \overline{\Delta}_{t_0,W}(b \to c) &= +\delta, \\
    \overline{\Delta}_{t_0,W}(c \to d) &= -3\delta, \\
    \overline{\Delta}_{t_0,W}(d \to a) &= 0.
\end{align*}
The cumulative cycle flux is:
\[
\overline{\Phi}(\gamma; t_0,W) = (+2\delta) + (+\delta) + (-3\delta) + (0) = 0.
\]
By Theorem T3 (time-aggregated cycle closure), this must always be zero. The example illustrates a choice of \emph{net} edge flow whose signed sum around the closed loop vanishes, even though individual ticks inside the window may have nonzero transient loop flux.
\end{example}

\subsection{Discrete potential representation (potential uniqueness)}

By Theorem T3, the time-aggregated cycle-closure assumption is equivalent to time-aggregated path-independence: the sum of cumulative postings along any open path depends only on its endpoints, not on the specific route taken. This is the discrete analogue of a curl-free vector field \emph{after clearing}, descending from a scalar potential.

Formally, the per-tick increments $\delta\Delta(\cdot,t)$ form sparse $1$-cochains on the recognition structure. Their cumulative sum over a clearing window defines a (generally non-sparse) $1$-cochain $\overline{\Delta}_{t_0,W}$. Under the time-aggregated cycle-closure assumption, Theorem T3 guarantees that $\overline{\Delta}_{t_0,W}$ is closed (path-independent): for every cycle $\gamma$, the sum $\overline{\Phi}(\gamma;t_0,W)=\sum_{e\in\gamma}\overline{\Delta}_{t_0,W}(e)$ vanishes. The discrete Poincar\'{e} lemma provides the existence and uniqueness of a potential function that generates this cumulative flow.

\begin{lemma}[Antisymmetry of the cumulative flow]
Assume the recognition structure is closed under reversal (if $(x\!\to\!y)\in E$ then $(y\!\to\!x)\in E$). Then for every edge $(x\!\to\!y)\in E$ and every clearing window $[t_0,t_0+W)$ we have
\[
\overline{\Delta}_{t_0,W}(y\!\to\!x) = -\overline{\Delta}_{t_0,W}(x\!\to\!y).
\]
\end{lemma}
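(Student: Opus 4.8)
The plan is to reduce the statement to a pointwise (tick-by-tick) antisymmetry of the per-tick increment and then extend it to the cumulative flow by linearity of the finite sum. First I would recall from Definition~\ref{def:edge-postings} that at each tick $\tau$ the per-tick increment $\delta\Delta(\cdot,\tau)$ is supported on at most one oriented edge $(u\to v)$ together with its reverse $(v\to u)$, taking the values $\Delta_\tau$ and $-\Delta_\tau$ there and $0$ on every other edge. The core identity to establish is
\[
\delta\Delta(y\to x,\tau)=-\,\delta\Delta(x\to y,\tau)\qquad\text{for every tick }\tau\text{ and every }(x\to y)\in E.
\]

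To verify this identity I would split into the (exhaustive) cases according to which edge, if any, is active at tick $\tau$. If the active pair at tick $\tau$ is exactly $\{x\to y,\ y\to x\}$, then by construction one of the two edges carries $+\Delta_\tau$ and the other carries $-\Delta_\tau$, so the identity holds directly. If the active pair is some other edge and its reverse (or if no event occurs at $\tau$), then both $\delta\Delta(x\to y,\tau)$ and $\delta\Delta(y\to x,\tau)$ equal $0$, and the identity holds trivially. In all cases the pointwise antisymmetry is satisfied.

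Finally I would sum the pointwise identity over the clearing window. Using the definition $\overline{\Delta}_{t_0,W}(e)=\sum_{\tau=t_0}^{t_0+W-1}\delta\Delta(e,\tau)$ and pulling the sign through the finite sum gives
\[
\overline{\Delta}_{t_0,W}(y\to x)=\sum_{\tau=t_0}^{t_0+W-1}\delta\Delta(y\to x,\tau)=-\sum_{\tau=t_0}^{t_0+W-1}\delta\Delta(x\to y,\tau)=-\,\overline{\Delta}_{t_0,W}(x\to y),
\]
as claimed. I expect no genuine obstacle here: the antisymmetry is entirely built into the construction of the per-tick cochain, and the only thing that requires care is confirming that the pointwise identity also covers the ticks on which a different edge (or none) is active---precisely the cases where both sides vanish. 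The closure-under-reversal hypothesis enters only to guarantee that $(y\to x)\in E$, so that $\overline{\Delta}_{t_0,W}(y\to x)$ is well-defined in the first place; it is this lemma that justifies the antisymmetry of $\overline{\Delta}_{t_0,W}$ invoked in the proof of Theorem~T3.
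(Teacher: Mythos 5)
Your proof is correct and follows essentially the same route as the paper's: per-tick antisymmetry of $\delta\Delta(\cdot,\tau)$ (which the paper asserts ``by construction'' and you verify via an explicit case split on which edge, if any, is active), followed by summation over the clearing window. Your added case analysis and the remark on why closure under reversal is needed merely make explicit what the paper leaves implicit.
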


\begin{proof}
Fix an edge $(x\!\to\!y)\in E$. By construction, each per-tick increment satisfies $\delta\Delta(y\!\to\!x,t)=-\delta\Delta(x\!\to\!y,t)$. Summing from $\tau=t_0$ to $t_0+W-1$ yields the stated antisymmetry for $\overline{\Delta}_{t_0,W}$.
\end{proof}

\begin{definition}
    A \textit{potential function} for a clearing window $[t_0,t_0+W)$ on a connected component $\mathcal{C}\subseteq X$ is a map $\overline{p}_{t_0,W}:\mathcal{C}\to \delta\mathbb{Z}$ such that for each edge $e=(x\!\to\!y)$ in $\mathcal{C}$, the edge difference reproduces the cumulative flow:
    $\overline{\Delta}_{t_0,W}(x\!\to\!y)=\overline{p}_{t_0,W}(y)-\overline{p}_{t_0,W}(x)$.
    This is the standard definition of a discrete gradient.
\end{definition}

\begin{lemma}[Discrete Poincar\'{e} lemma]
Let $G=(X,E)$ be a connected graph and let $\omega:E\to \delta\mathbb{Z}$ be an antisymmetric function: $\omega(y\!\to\!x)=-\omega(x\!\to\!y)$. If the sum of $\omega$ around every cycle is zero, then there exists $p:X\to \delta\mathbb{Z}$ such that $\omega(x\!\to\!y)=p(y)-p(x)$. The function $p$ is unique up to an additive constant.
\end{lemma}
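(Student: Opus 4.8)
The plan is to carry out the standard base-point construction: pin the potential at one vertex, define it elsewhere by summing $\omega$ along a connecting path, and use the vanishing-cycle-sum hypothesis to make this well-defined.

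First I would fix a base vertex $x_0\in X$, set $p(x_0)=0$, and for any $x\in X$ choose a directed path $P_{x_0\to x}$ --- available by connectedness together with the reversal-closure of $E$ --- and define
\[
p(x):=\sum_{e\in P_{x_0\to x}}\omega(e).
\]
Since $\delta\mathbb{Z}$ is closed under addition, $p(x)\in\delta\mathbb{Z}$, as required by the statement.

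The crux, and the step I expect to be the main obstacle, is showing that $p(x)$ is independent of the chosen path. Given two paths $P,P'$ from $x_0$ to $x$, I would consider the closed walk $\gamma=P\circ\overline{P'}$ (follow $P$, then $P'$ in reverse). Antisymmetry of $\omega$ gives $\sum_{e\in\overline{P'}}\omega(e)=-\sum_{e\in P'}\omega(e)$, so the walk sum equals $\sum_{e\in P}\omega(e)-\sum_{e\in P'}\omega(e)$. The delicate point is that $\gamma$ is a closed \emph{walk}, not a simple cycle: it can revisit vertices and repeat edges. I would decompose $\gamma$ into directed simple cycles --- peeling off a cycle each time a vertex first recurs, with any immediate backtracking cancelling by antisymmetry --- so that the walk sum equals a sum of per-cycle fluxes, each of which vanishes by hypothesis. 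Hence the two path sums agree and $p$ is well-defined. This is exactly the $(1)\Rightarrow(2)$ argument of Theorem~T3 specialized to $\omega$, so it may alternatively be cited rather than repeated.

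The remaining verifications are routine. For the gradient identity, given any edge $(x\to y)\in E$ I would take a path from $x_0$ to $x$ and append this edge, yielding a path to $y$ and hence $p(y)=p(x)+\omega(x\to y)$, i.e.\ $\omega(x\to y)=p(y)-p(x)$. For uniqueness up to an additive constant, if $p_1$ and $p_2$ both generate $\omega$ then $q:=p_1-p_2$ satisfies $q(y)-q(x)=0$ across every edge; summing this along any path from $x_0$ and invoking connectedness forces $q$ to be constant on $X$, so $p$ is determined up to the choice of $p(x_0)$.
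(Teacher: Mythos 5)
Your proof is correct, but it follows a different route from the paper. You use the classical base-point construction: define $p(x)$ by summing $\omega$ along an \emph{arbitrary} path from $x_0$, and then do the real work in proving path-independence, via the decomposition of the closed walk $P\circ\overline{P'}$ into directed cycles (with backtracking handled by antisymmetry) --- which, as you note, is exactly the $(1)\Rightarrow(2)$ direction of Theorem~T3 and could simply be cited. The paper instead fixes a spanning tree $T$ and root $v_0$, defines $p(v)$ by summing along the \emph{unique} tree path, so well-definedness is automatic and no walk-decomposition is needed; the cycle hypothesis then enters only once per non-tree edge, through its fundamental cycle, to verify the gradient identity $\omega(e)=p(y)-p(x)$ off the tree. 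The trade-off: your argument is the standard potential-theory proof and dovetails nicely with T3 (avoiding duplicated work in the paper's logical chain), whereas the paper's spanning-tree proof is self-contained, sidesteps the somewhat delicate closed-walk-to-simple-cycle peeling argument entirely, and implicitly establishes a slightly sharper fact --- vanishing of $\omega$ on the fundamental cycles of any spanning tree already suffices. One small point to tighten in your write-up: when you verify the gradient identity by appending the edge $(x\to y)$ to a path into $x$, the result may be a walk rather than a simple path (if $y$ already occurs on the path), so you should either note that your path-independence argument applies verbatim to walks, or reduce to a simple path first; the paper's tree-based verification avoids this issue by construction.
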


\begin{proof}
\textbf{Existence:} Fix a spanning tree $T$ of $G$ and a root $v_0\in X$. For any $v\in X$, there is a unique simple path $P_{v_0\to v}$ in $T$ from $v_0$ to $v$. Define
\[
    p(v):=\sum_{e\in P_{v_0\to v}} \omega(e)\in \delta\mathbb{Z},\quad p(v_0):=0.
\]
This is well-defined because $T$ is a spanning tree, so the path $P_{v_0\to v}$ is unique.

\textbf{Verification for tree edges:} For an edge $e=(x\!\to\!y)$ in $T$, the unique paths $P_{v_0\to x}$ and $P_{v_0\to y}$ differ by exactly the edge $e$. More precisely, $P_{v_0\to y} = P_{v_0\to x} \circ (x\!\to\!y)$ (concatenation). Therefore,
\[
p(y) = \sum_{f\in P_{v_0\to y}} \omega(f) = \sum_{f\in P_{v_0\to x}} \omega(f) + \omega(e) = p(x) + \omega(e),
\]
so $p(y) - p(x) = \omega(e)$ for all tree edges.

\textbf{Verification for non-tree edges:} If $e=(x\!\to\!y)\notin T$, then adding $e$ to $T$ creates a unique fundamental cycle $C$ (since $T$ is a spanning tree, there is exactly one cycle containing $e$). This cycle consists of $e$ plus the unique path in $T$ from $y$ to $x$, call it $P_{y\to x}^T$. By hypothesis, the sum of $\omega$ around this cycle is zero:
\[
0 = \sum_{f\in C} \omega(f) = \omega(e) + \sum_{f\in P_{y\to x}^T} \omega(f).
\]
Since $P_{y\to x}^T$ is a path in $T$ from $y$ to $x$, and by antisymmetry $\omega(y\!\to\!x) = -\omega(x\!\to\!y)$ for edges in $T$, we have
\[
\sum_{f\in P_{y\to x}^T} \omega(f) = -\sum_{f\in P_{x\to y}^T} \omega(f) = -(p(y) - p(x)),
\]
where $P_{x\to y}^T$ is the unique path in $T$ from $x$ to $y$. Therefore,
\[
0 = \omega(e) - (p(y) - p(x)),
\]
which implies $\omega(e) = p(y) - p(x)$.

\textbf{Uniqueness:} Suppose $\tilde{p}: X \to \delta\mathbb{Z}$ also satisfies $\omega(x\!\to\!y) = \tilde{p}(y) - \tilde{p}(x)$ for all edges $(x\!\to\!y)\in E$. Then for any edge $(x\!\to\!y)$,
\[
(\tilde{p}(y) - \tilde{p}(x)) - (p(y) - p(x)) = \omega(x\!\to\!y) - \omega(x\!\to\!y) = 0,
\]
so $(\tilde{p} - p)(y) = (\tilde{p} - p)(x)$ for all adjacent vertices. Since $G$ is connected, this implies $\tilde{p} - p$ is constant on $X$. Setting the constant by choosing $\tilde{p}(v_0) = p(v_0) = 0$ (or any fixed value) determines $\tilde{p}$ uniquely up to this choice.
\end{proof}

Applying the discrete Poincar\'{e} lemma to the cleared cumulative flow $\overline{\Delta}_{t_0,W}$ and using Theorem~T3 (time-aggregated cycle closure) yields the following result.

\begin{theorem}[T4: Potential Uniqueness]
Fix a clearing window $[t_0,t_0+W)$ and a connected component $\mathcal{C}\subseteq X$. Under Theorem T3, there exists a potential
\[
\overline{p}_{t_0,W} : \mathcal{C} \longrightarrow \delta\mathbb{Z}
\]
such that for each edge $e=(x\!\to\!y)$ in $\mathcal{C}$,
\[
\overline{\Delta}_{t_0,W}(e) = \overline{p}_{t_0,W}(y) - \overline{p}_{t_0,W}(x).
\]
Moreover, $\overline{p}_{t_0,W}$ is unique up to an additive constant on $\mathcal{C}$.
\end{theorem}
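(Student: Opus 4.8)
The plan is to recognize that T4 is essentially a corollary: all the substantive work has already been carried out in the Antisymmetry lemma, Theorem~T3, and the Discrete Poincar\'{e} lemma, so the proof reduces to verifying that the cleared cumulative flow $\overline{\Delta}_{t_0,W}$, restricted to the connected component $\mathcal{C}$, satisfies the three hypotheses of the Poincar\'{e} lemma and then invoking that lemma. First I would set $\omega := \overline{\Delta}_{t_0,W}$, regarded as a function on the edges lying within $\mathcal{C}$, and view $\mathcal{C}$ (with its induced edge set) as a connected graph in its own right.

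Next I would check the three hypotheses in turn. For the codomain: each per-tick increment $\delta\Delta(e,\tau)$ lies in $\delta\mathbb{Z}$, and $\overline{\Delta}_{t_0,W}(e)$ is a finite sum over $\tau\in[t_0,t_0+W)$ of such increments, so $\overline{\Delta}_{t_0,W}(e)\in\delta\mathbb{Z}$ and hence $\omega$ maps edges of $\mathcal{C}$ into $\delta\mathbb{Z}$. For antisymmetry: this is exactly the content of the Antisymmetry lemma, which gives $\overline{\Delta}_{t_0,W}(y\!\to\!x)=-\overline{\Delta}_{t_0,W}(x\!\to\!y)$. For cycle closure: by hypothesis (Theorem~T3 together with the time-aggregated cycle-closure assumption) we have $\overline{\Phi}(\gamma;t_0,W)=\sum_{e\in\gamma}\overline{\Delta}_{t_0,W}(e)=0$ for every directed cycle $\gamma$, and since every cycle of $\mathcal{C}$ is in particular a cycle of $G$, this vanishing holds for all cycles of $\mathcal{C}$.

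With the hypotheses verified, I would apply the Discrete Poincar\'{e} lemma to $\omega$ on the connected graph $\mathcal{C}$. This produces a function $\overline{p}_{t_0,W}:\mathcal{C}\to\delta\mathbb{Z}$ with $\overline{\Delta}_{t_0,W}(x\!\to\!y)=\overline{p}_{t_0,W}(y)-\overline{p}_{t_0,W}(x)$ for every edge $(x\!\to\!y)$ in $\mathcal{C}$, which is precisely the asserted gradient representation, and the lemma simultaneously delivers uniqueness up to an additive constant on $\mathcal{C}$.

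Honestly, there is no genuinely hard step here: the only point requiring care is the passage to a single connected component. The Poincar\'{e} lemma is stated for a connected graph, so applying it to $\mathcal{C}$ is legitimate precisely because $\mathcal{C}$ is connected, and the cycle-closure hypothesis transfers from $G$ to $\mathcal{C}$ because cycles of the induced subgraph are cycles of the ambient graph. If a canonical representative is desired, the additive constant can be fixed by rooting at a chosen vertex $v_0\in\mathcal{C}$ with $\overline{p}_{t_0,W}(v_0)=0$.
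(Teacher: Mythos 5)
Your proposal is correct and follows essentially the same route as the paper: both proofs reduce T4 to the discrete Poincar\'{e} lemma by citing the antisymmetry lemma and the cycle closure supplied by Theorem~T3, then read off existence and uniqueness up to an additive constant. Your extra checks (that $\overline{\Delta}_{t_0,W}$ takes values in $\delta\mathbb{Z}$ as a finite sum of $\delta\mathbb{Z}$-valued increments, and that cycles of the induced subgraph on $\mathcal{C}$ are cycles of $G$) merely make explicit details the paper leaves implicit, so no substantive difference remains.
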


\begin{proof}
Theorem T3 asserts time-aggregated cycle closure: for every cycle $\gamma$, $\overline{\Phi}(\gamma; t_0,W) = 0$. This means the cumulative flow $\overline{\Delta}_{t_0,W}$ is a closed 1-cochain: the sum around any closed cycle is zero.

The discrete Poincar\'{e} lemma (proved above) provides the key tool: if a 1-cochain $\omega$ is closed (all cycle sums vanish), then there exists a potential function $p$ such that $\omega = \delta p$, where $\delta p$ denotes the discrete gradient (edge differences of $p$).

Applying this to the cumulative flow: since $\overline{\Delta}_{t_0,W}$ is closed by T3, and is antisymmetric by Lemma (antisymmetry of the cumulative flow), the discrete Poincar\'{e} lemma guarantees the existence of a potential $\overline{p}_{t_0,W}$ such that $\overline{\Delta}_{t_0,W}(x\!\to\!y) = \overline{p}_{t_0,W}(y) - \overline{p}_{t_0,W}(x)$ for all edges.

Uniqueness up to an additive constant follows from the fact that if $\widetilde{\overline{p}}_{t_0,W}$ also satisfies $\overline{\Delta}_{t_0,W}(x\!\to\!y) = \widetilde{\overline{p}}_{t_0,W}(y) - \widetilde{\overline{p}}_{t_0,W}(x)$, then $(\widetilde{\overline{p}}_{t_0,W} - \overline{p}_{t_0,W})(y) - (\widetilde{\overline{p}}_{t_0,W} - \overline{p}_{t_0,W})(x) = 0$ for all edges, implying $\widetilde{\overline{p}}_{t_0,W} - \overline{p}_{t_0,W}$ is constant on each connected component.

The proof is constructive: fix a spanning tree, choose a root vertex, and define the potential by summing cumulative postings along tree paths. The cycle condition (T3) ensures this definition is consistent for all edges.
\end{proof}

Theorem T4 establishes that every admissible \emph{cleared} cumulative recognition pattern arises from a scalar potential. This potential is unique up to an additive constant on each connected component, reflecting the gauge freedom familiar in classical physics. In the present framework, the potential representation is a direct consequence of (i) antisymmetry under edge reversal and (ii) time-aggregated cycle closure (T3), which together encode time-aggregated path-independence.

Note that the potential representation is a statement about the \emph{cleared cumulative} flow: if over a clearing window $[t_0,t_0+W)$ an edge has net flow $\overline{\Delta}_{t_0,W}(x\!\to\!y)=k\delta$, then $\overline{p}_{t_0,W}(y)-\overline{p}_{t_0,W}(x)=k\delta$.

\begin{example}[Potential Function on a Small Graph (after clearing)]
\label{ex:potential}
Consider the recognition structure from Example~\ref{ex:recognition-structure} with nodes $\{a,b,c,d\}$ and the cycle $(a \to b \to c \to d \to a)$ plus edge $(a \to c)$. Over a clearing window $[t_0,t_0+W)$, suppose the cumulative flow is:
\begin{align*}
    \overline{\Delta}_{t_0,W}(a \to b) &= +2\delta, \\
    \overline{\Delta}_{t_0,W}(b \to c) &= +\delta, \\
    \overline{\Delta}_{t_0,W}(c \to d) &= -3\delta, \\
    \overline{\Delta}_{t_0,W}(d \to a) &= 0, \\
    \overline{\Delta}_{t_0,W}(a \to c) &= +3\delta.
\end{align*}
Since $\overline{\Phi}(a \to b \to c \to d \to a; t_0,W) = 0$ (as verified in Example~\ref{ex:cycle-flux}), Theorem T4 guarantees a potential exists. Following the constructive proof of the discrete Poincar\'{e} lemma, choose $a$ as the reference vertex and set $\overline{p}_{t_0,W}(a) = 0$. Then:
\begin{align*}
    \overline{p}_{t_0,W}(b) &= \overline{p}_{t_0,W}(a) + \overline{\Delta}_{t_0,W}(a \to b) = 0 + 2\delta = 2\delta, \\
    \overline{p}_{t_0,W}(c) &= \overline{p}_{t_0,W}(b) + \overline{\Delta}_{t_0,W}(b \to c) = 2\delta + \delta = 3\delta, \\
    \overline{p}_{t_0,W}(d) &= \overline{p}_{t_0,W}(c) + \overline{\Delta}_{t_0,W}(c \to d) = 3\delta + (-3\delta) = 0.
\end{align*}
We verify that $\overline{p}_{t_0,W}(d) - \overline{p}_{t_0,W}(a) = 0 - 0 = 0 = \overline{\Delta}_{t_0,W}(d \to a)$, confirming the cycle closes. For the shortcut edge $(a \to c)$, we check: $\overline{p}_{t_0,W}(c) - \overline{p}_{t_0,W}(a) = 3\delta - 0 = 3\delta = \overline{\Delta}_{t_0,W}(a \to c)$, which is consistent. The potential is unique up to an additive constant: if we had chosen $\overline{p}_{t_0,W}(a) = k$ instead of $0$, all values would shift by $k$, but the edge differences would remain unchanged.
\end{example}

\subsection{Minimal schedule period \texorpdfstring{$2^d$}{2d}}

Having established atomic single-event updates (Theorem T2), quantization (Proposition T8), and the cost function (Theorem T5), we now examine combinatorial constraints that link a discrete carrier (modeled here by $Q_d$) to discrete time. In particular, we seek lower bounds on the period required to visit all spatial positions under atomic updates.

For the purposes of this section, we treat $d$ as an abstract dimension parameter indexing the hypercube family $Q_d$. A separate (conditional) discussion of selecting $d=3$ is given later.

Thus, the fundamental structure is the $d$-dimensional hypercube $Q_d$, which at $d = 3$ (denoted $Q_3$) provides the minimal cell for ledger-compatible dynamics. The hypercube combinatorics are:

\begin{table}[tbp]
\centering
\small
\begin{tabular}{lcc}
\toprule
Object & Formula & $d=3$ \\
\midrule
Vertices & $2^d$ & 8 \\
Edges & $d \cdot 2^{d-1}$ & 12 \\
Faces & $2d$ & 6 \\
\bottomrule
\end{tabular}
\caption{Combinatorics of the $d$-cube at $d=3$. The $Q_3$ hypercube has 8 vertices, 12 edges, and 6 faces.}
\end{table}

\subsubsection{Ledger-Compatible Walk Constraints}

Atomic single-event updates impose strict constraints on how recognition events can be scheduled across the spatial network. To characterize the minimal period required to visit all spatial positions, we introduce the concept of a \emph{ledger-compatible walk}: a temporal sequence of recognition events satisfying atomicity and spatial completeness.

Under Theorem T2, each tick processes at most one recognition event. In the scheduler model below, we represent the system by a sequence of \emph{active vertices} $(v_t)_{t=0}^{T-1}$ in $Q_d$ such that each tick advances along one edge (so $v_{t+1}$ is adjacent to $v_t$). Concretely, we take the single event at tick $t$ to be the directed edge traversal
\[
e_t := (v_t \to v_{t+1}),
\]
so there is one edge-event per tick, while $v_t$ is the canonical vertex label for tick $t$. This aligns the Gray-code ``8-tick'' walker picture.

\begin{enumerate}
\item \textbf{Atomicity:} At most one event per tick; in this scheduler model we traverse one edge per tick (no concurrent traversals).
    
\item \textbf{Spatial Completeness:} All vertices of $Q_d$ appear at least once among the active-vertex labels $(v_t)$ over one period.
    
\item \textbf{Timestamp Uniqueness:} Over one period, the active-vertex labels are all distinct: $v_t\neq v_{t'}$ for $t\neq t'$.
\end{enumerate}

These constraints ensure that the ledger update is both \emph{atomic} (no concurrency) and \emph{complete} (all spatial positions are visited), while maintaining temporal ordering.

As a straightforward consequence, we have the next:

\begin{theorem}[T6: Minimal period \texorpdfstring{$2^d$}{2d} (eight ticks for $d=3$)]\label{thm:T6}
Let $C$ be the vertex set of a $d$-dimensional hypercube $Q_d$, with $|C| = 2^d$, and let $T$ be the scheduler period for a ledger-compatible walk.

\begin{enumerate}
    \item \textbf{(Sufficiency)} If $T \ge 2^d$, then there exists a cyclic sequence of active vertices $(v_t)_{t=0}^{T-1}$ that is spatially complete and timestamp-unique (each vertex appears exactly once among the labels $v_t$), with $v_{t+1}$ adjacent to $v_t$ for each $t$ (one edge traversal per tick). For $d = 3$, the Gray code Hamiltonian cycle realizes this minimal period: $000 \to 001 \to 011 \to 010 \to 110 \to 111 \to 101 \to 100 \to 000$.
    
    \item \textbf{(Necessity)} If $T < 2^d$, then $T$ ticks are insufficient to assign a distinct active-vertex label to each of the $2^d$ vertices. By the pigeonhole principle, some vertex label must repeat, so the walk cannot be both spatially complete and timestamp-unique.
\end{enumerate}
\end{theorem}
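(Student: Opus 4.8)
The plan is to handle the two halves separately: sufficiency is an explicit \emph{construction} of a Hamiltonian cycle on $Q_d$, whereas necessity is a pure \emph{counting} argument. Throughout, the single event at tick $t$ is the edge traversal $e_t=(v_t\to v_{t+1})$, so ``atomicity'' amounts to each step moving along one hypercube edge, i.e.\ to $v_t$ and $v_{t+1}$ differing in exactly one coordinate.

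For \textbf{sufficiency} I would exhibit a walk of period exactly $2^d$ meeting all three scheduler constraints, which shows $2^d$ ticks suffice. The construction is the binary reflected Gray code $g_0,g_1,\dots,g_{2^d-1}$, defined by recursion on $d$: for $d=1$ take $(0,1)$, and obtain the dimension-$(d{+}1)$ list by prepending $0$ to the dimension-$d$ list read forwards, followed by prepending $1$ to the same list read backwards. Setting $v_t:=g_t$ and reading indices modulo $2^d$ for the wraparound, I would verify two claims by induction on $d$. First, the $2^d$ codewords are pairwise distinct: the $0$-block and $1$-block are individually distinct by the inductive hypothesis (prepending a fixed bit is injective) and are disjoint because they carry different leading bits; this yields spatial completeness and timestamp uniqueness simultaneously. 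Second, each cyclically consecutive pair differs in exactly one coordinate: internal transitions within either block inherit the single-bit property of the dimension-$d$ code, while the two ``seams''---the reflection seam between the forward and reversed blocks and the cyclic wraparound from the last word to the first---each join words of the form $0w$ and $1w$ with a common suffix $w$, hence flip only the new leading bit. Unrolling the recursion at $d=3$ recovers the stated cycle $000 \to 001 \to 011 \to 010 \to 110 \to 111 \to 101 \to 100 \to 000$, and one checks by inspection that every step flips a single bit and that all eight labels are distinct.

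For \textbf{necessity} I would argue by cardinality. Regard the labelling as a map $\phi:\{0,1,\dots,T-1\}\to V(Q_d)$, $\phi(t)=v_t$, where $|V(Q_d)|=2^d$. Spatial completeness says $\phi$ is surjective; timestamp uniqueness says $\phi$ is injective. An injective map has image of size exactly $T$, so when $T<2^d$ the image is a proper subset of $V(Q_d)$ and completeness fails; equivalently, packing all $2^d$ required vertex-visits into only $T<2^d$ ticks forces, by the pigeonhole principle, two visits to share a tick, i.e.\ a repeated label, contradicting timestamp uniqueness. Hence the two requirements are jointly unsatisfiable for $T<2^d$, so the minimal admissible period is $2^d$, attained by the Gray-code cycle.

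I expect the only real work to be the inductive adjacency verification for the reflected Gray code, concentrated in the two seam transitions. The delicate point is that the closing edge of the cycle (from the last codeword back to the first) is \emph{not} an internal transition of either recursive block, so it must be checked on its own; the reflection is arranged precisely so that the last and first words share their lower $d$ bits and differ only in the new top bit, closing the cycle. Because the internal transitions use only the linear (path) single-bit property of the lower-dimensional code while both seams are settled by the leading-bit flip, the induction may carry just the path property and obtain cyclic closure for free at each level; making this bookkeeping explicit is the one place where care is needed. The necessity direction, by contrast, is entirely routine.
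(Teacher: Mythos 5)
Your proposal is correct and takes essentially the same approach as the paper: both halves match---sufficiency by exhibiting the binary reflected Gray code as a Hamiltonian cycle on $Q_d$ (with the explicit 8-cycle at $d=3$), and necessity by the pigeonhole/cardinality argument. The only difference is presentational: you define the Gray code by the reflection recursion and verify distinctness and single-bit adjacency by induction (including the two seam edges), whereas the paper uses the closed-form map $g(k)=k\oplus(k\gg 1)$ and invokes its single-bit property as standard, so your version is somewhat more self-contained but proves the same facts about the same object.
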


\begin{proof}
\textbf{Sufficiency:} It is enough to exhibit a Hamiltonian \emph{cycle} on $Q_d$ for each $d\ge 1$ (a cyclic listing of all $2^d$ binary strings where consecutive strings differ in exactly one bit, including the last-to-first step). A standard construction is the cyclic Gray code.

Define the (reflected) Gray-code map $g:\{0,1,\dots,2^d-1\}\to\{0,1\}^d$ by
\[
g(k):=k\oplus (k\!\gg\!1),
\]
where $\oplus$ is bitwise XOR and $(k\!\gg\!1)$ is the right shift. Then consecutive values differ by one bit:
\[
g(k)\text{ and }g(k+1)\text{ differ in exactly one coordinate for }0\le k<2^d-1,
\]
so $(g(0),g(1),\dots,g(2^d-1))$ is a Hamiltonian path on $Q_d$. Moreover, $g(2^d-1)$ differs from $g(0)$ in exactly one bit (indeed $g(0)=0\cdots 0$ and $g(2^d-1)=10\cdots 0$), so the path closes to a Hamiltonian cycle.

Thus, for any $d\ge 1$ there exists a cyclic sequence of length $2^d$ that visits each vertex exactly once and advances by one edge per tick. For $d=3$, this yields the explicit 8-cycle
$000 \to 001 \to 011 \to 010 \to 110 \to 111 \to 101 \to 100 \to 000$.

\textbf{Necessity:} By constraint (3) (timestamp uniqueness), each of the $2^d$ vertices must appear exactly once in the sequence $(v_t)_{t=0}^{T-1}$. Therefore, $T \ge 2^d$. If $T < 2^d$, then by the pigeonhole principle, some vertex label must repeat, violating timestamp uniqueness. Therefore, $T \ge 2^d$ is necessary.
\end{proof}

Therefore, the minimal period compatible with Theorem T2 for a $d$-dimensional hypercube is exactly
\begin{equation}
    T_{\min} = 2^d.
\end{equation}

For $d = 3$, this yields the eight-tick period: $T_{\min} = 2^3 = 8$ (within the scheduler model above).

Theorem T6 establishes the minimal period for a ledger-compatible walk, but it does not address whether this period is sufficient to distinguish all possible patterns. This leads to a complementary result about coverage:

\begin{theorem}[T7: Coverage Lower Bound]\label{thm:T7}
Let $Q_d$ be a $d$-dimensional hypercube with $2^d$ vertices, and let $T$ be the period of a ledger-compatible walk in the scheduler model above (one active vertex per tick). If $T < 2^d$, then the walk cannot cover all $2^d$ vertices within one period without repetition.
\end{theorem}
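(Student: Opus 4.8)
The plan is to prove Theorem~T7 by reducing it to the pigeonhole argument already established in the necessity direction of Theorem~T6. The statement is essentially the contrapositive-flavored restatement of T6's necessity clause: where T6 says that timestamp uniqueness forces $T \ge 2^d$, here we assert that $T < 2^d$ makes repetition-free coverage impossible. Since the two claims concern the same scheduler model (one active vertex per tick, the walk $(v_t)_{t=0}^{T-1}$), the cleanest route is to invoke the counting principle directly rather than reprove anything from scratch.

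First I would fix the setup: we have $T$ active-vertex labels $v_0, v_1, \dots, v_{T-1}$ drawn from the vertex set of $Q_d$, which has cardinality $2^d$. ``Covering all $2^d$ vertices'' means the map $t \mapsto v_t$ is surjective onto the vertex set; ``without repetition'' means it is injective. A function between finite sets that is both injective and surjective requires the domain and codomain to have equal cardinality, so covering without repetition forces $T = 2^d$. Equivalently, if $T < 2^d$, then the image $\{v_t : 0 \le t < T\}$ has at most $T < 2^d$ elements, so at least one vertex is omitted; this is precisely the failure of coverage. Alternatively, framed as pigeonhole: with fewer than $2^d$ ticks one cannot place $2^d$ distinct labels, so any attempt to hit every vertex must reuse a label, contradicting the no-repetition requirement.

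The proof is almost entirely a restatement of the necessity half of T6, so I would write it as a short one-paragraph deduction. I would state: suppose for contradiction that a ledger-compatible walk with $T < 2^d$ covers all $2^d$ vertices without repetition. No repetition means the labels $v_0,\dots,v_{T-1}$ are pairwise distinct, so the image has exactly $T$ elements; coverage means this image equals the full vertex set of size $2^d$. Hence $T = 2^d$, contradicting $T < 2^d$. I would close by noting this is the coverage-phrased counterpart of T6's necessity clause, emphasizing that it follows from the same pigeonhole/cardinality principle.

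The main obstacle here is not mathematical difficulty — the argument is elementary counting — but rather avoiding circularity or redundancy with T6. The real task is to make precise the two quantifiers ``cover all vertices'' (surjectivity) and ``without repetition'' (injectivity) and to observe that a surjection from a set of size $T$ onto a set of size $2^d$ forces $T \ge 2^d$, while an injection additionally present would force $T \le 2^d$; only $T = 2^d$ reconciles both. Since we assume $T < 2^d$ strictly, coverage-without-repetition is impossible. The chief care required is stating clearly that the conclusion is the impossibility of the conjunction (complete coverage AND no repetition) rather than of either property alone, so that the theorem reads as a genuine lower bound on $T$ rather than an overclaim.
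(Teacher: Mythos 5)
Your proposal is correct and uses essentially the same argument as the paper: a period-$T$ schedule carries at most $T$ distinct active-vertex labels, so $T < 2^d$ makes covering all $2^d$ vertices without repetition impossible by counting/pigeonhole. The paper's proof is exactly this one-paragraph cardinality observation, so your reduction (phrased via injectivity/surjectivity) matches it in substance.
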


\begin{proof}
In the scheduler model above, each tick carries one active-vertex label $v_t\in Q_d$. Therefore a period-$T$ schedule can label at most $T$ distinct vertices. If $T<2^d$, the schedule cannot cover all $2^d$ vertices within one period without repetition.
\end{proof}

Together, Theorems T6 and T7 show that $T = 2^d$ is both necessary and sufficient for the scheduler model stated above: $T \ge 2^d$ is sufficient via a Hamiltonian cycle (Gray code at $d=3$), while $T < 2^d$ is insufficient for covering all vertices without repetition. For $d = 3$, this yields the eight-tick period $T = 8$.

\subsection{Conditional dimension selection: \texorpdfstring{$d = 3$}{d=3}}\label{sec:dimension}

\textbf{Warning:} This subsection presents \emph{conditional} arguments that select $d=3$ under additional hypotheses beyond the core ledger framework. These arguments are not part of the main derivation chain (T1--T8) and should be read as exploratory extensions requiring explicit additional assumptions.

We now present a conditional argument selecting $d=3$ under additional hypotheses beyond the scheduling model. Specifically, we combine (i) the $2^d$-tick counting structure (from T6), (ii) a ``gap-45'' synchronization criterion (an additional modeling hypothesis), and (iii) a linking-based distinguishability requirement (a topological constraint). Each of these, if adopted, independently selects $d=3$.

\begin{theorem}[Conditional dimensional rigidity]
Let $d \in \mathbb{N}$. Under the following \emph{additional hypotheses} (beyond the core framework):
\begin{enumerate}
    \item \textbf{(Gap-45 synchronization hypothesis)} The ledger period $2^d$ and a reference period of 45 ticks must synchronize with a common period of 360 ticks: $\mathrm{lcm}(2^d, 45) = 360$.
    \item \textbf{(Linking requirement hypothesis)} Dimensions $d<3$ are excluded because nontrivial topological linking (required for certain distinguishability properties) is only possible in $d \ge 3$.
\end{enumerate}
Then $d = 3$ is uniquely determined.
\end{theorem}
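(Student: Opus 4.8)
The plan is to treat the two hypotheses in turn and observe that the gap-45 condition already constrains $d$ sharply by an elementary coprimality computation, while the linking hypothesis merely confirms the resulting value lies in the admissible range. First I would factor the relevant integers: write $45 = 3^2 \cdot 5$ and $360 = 2^3 \cdot 3^2 \cdot 5 = 8 \cdot 45$. Since $2^d$ is a pure power of $2$ while $45$ has only the prime factors $3$ and $5$, the two are coprime, so $\gcd(2^d, 45) = 1$ for every $d \ge 0$. The identity $\mathrm{lcm}(m,n) = mn/\gcd(m,n)$ then collapses to $\mathrm{lcm}(2^d, 45) = 2^d \cdot 45$.

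Next I would solve the synchronization equation. Substituting into $\mathrm{lcm}(2^d, 45) = 360$ yields $2^d \cdot 45 = 360$, hence $2^d = 8 = 2^3$, so $d = 3$. Uniqueness is immediate: the map $d \mapsto 2^d$ is strictly increasing, so the sequence $\mathrm{lcm}(2^d, 45) = 45, 90, 180, 360, 720, \dots$ is strictly increasing in $d$ and attains the value $360$ at exactly one index. Thus the gap-45 hypothesis alone already forces $d = 3$, with no other solution in $\mathbb{N}$.

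Finally I would check consistency with the linking hypothesis, which excludes $d < 3$ on the grounds that nontrivial topological linking is available only in dimension $\ge 3$. Since the lcm computation delivers $d = 3 \ge 3$, the two hypotheses are jointly satisfiable precisely at $d = 3$, and the linking constraint is subsumed by (indeed redundant with) the gap-45 condition. The ``hard part'' here is not technical—the argument reduces to a one-line coprimality calculation—but interpretive: both hypotheses are imposed \emph{a priori} and are engineered to be compatible with $d = 3$, so the cleanest framing is that the theorem is a selection/consistency statement rather than a derivation of dimension from the core ledger axioms. I would state this caveat explicitly, in keeping with the section's stated conditional status.
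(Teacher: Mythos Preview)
Your proof is correct and follows essentially the same elementary route as the paper: both compute $\mathrm{lcm}(2^d,45)$ via prime factorization and solve for $d$. The one substantive difference is that you (correctly) observe that coprimality gives $\mathrm{lcm}(2^d,45)=2^d\cdot 45$, so hypothesis~(1) alone forces $d=3$ exactly and the linking hypothesis is redundant; the paper instead extracts only $d\le 3$ from the gap-45 condition (via the formula $\mathrm{lcm}(2^d,45)=2^{\max(d,3)}\cdot 3^2\cdot 5$, which as written is a slip---the exponent should simply be $d$) and then invokes linking for $d\ge 3$. Your version is the cleaner and more accurate accounting of which hypothesis does the work.
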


\begin{proof}
From hypothesis (1): $\mathrm{lcm}(2^d, 45) = 360 = 8 \cdot 45$. Since $360 = 2^3 \cdot 3^2 \cdot 5$ and $45 = 3^2 \cdot 5$, we have $\mathrm{lcm}(2^d, 45) = 2^{\max(d,3)} \cdot 3^2 \cdot 5 = 360$. This requires $\max(d,3) = 3$, hence $d \le 3$.

From hypothesis (2): $d \ge 3$ (as argued in the linking discussion below).

Therefore, $d = 3$ is the unique solution.
\end{proof}

\begin{remark}[On the gap-45 hypothesis]
The gap-45 synchronization criterion is an \emph{additional modeling hypothesis} motivated by considerations outside the scope of this paper (e.g., connections to angular periodicity or golden-angle structures). It is not derived from the cost axioms or ledger structure. Readers who do not adopt this hypothesis should treat the $d=3$ selection as conditional on the linking argument alone, or on other independent constraints.
\end{remark}

\subsubsection{The Linking Argument}

The deeper reason $d = 3$ is special involves \emph{topological linking}:
\begin{itemize}
    \item \textbf{$d = 2$}: In the plane, ``linking'' of two closed curves is not a nontrivial topological invariant in the same way as in three dimensions (curves can be separated in $\mathbb{R}^2$).
    \item \textbf{$d = 3$}: In three dimensions, disjoint closed curves can be linked; the linking number is a topological invariant (e.g.\ via the Hopf link).
    \item \textbf{$d \geq 4$}: For embeddings of 1-dimensional loops, the classical three-dimensional notion of linking is not stable in the same way; additional ambient dimension generically permits unlinking moves that are forbidden in $d=3$.
\end{itemize}
Therefore, $d = 3$ is the \emph{unique} dimension with non-trivial linking of closed curves in the classical sense. Interpreting ``distinguishability via linking'' as a model requirement yields a $d=3$ selection under that hypothesis.

\begin{remark}[Linking penalty as an additional modeling hypothesis]
The cost and ledger derivations in this manuscript do not, by themselves, assign a numerical ``cost of linking'' to linked configurations, nor do they identify a minimum crossing cost. Incorporating a \emph{linking penalty} therefore requires an additional hypothesis specifying how topological linking is mapped into the ledger/cost formalism. A natural dimensionless scale one might use is the log-scale reference $J_{\text{bit}}=\ln\phi$ introduced above, but this identification is not derived here.
\end{remark}

\subsubsection{Summary: Three Independent Arguments for \texorpdfstring{$d = 3$}{d=3}}

The conditional $d=3$ selection rests on three independent constraint routes, each sufficient on its own:

\begin{table}[tbp]
\centering
\small
\caption{Three arguments selecting $d = 3$ (conditional)}
\label{tab:D3forcing}
\begin{tabular}{lll}
\toprule
\textbf{Argument} & \textbf{Constraint} & \textbf{Selects} \\
\midrule
Linking & Non-trivial knot theory & $d = 3$ \\
$2^d$-tick & $2^d = 8$ (hypercube counting) & $d = 3$ \\
Gap-45 & $\mathrm{lcm}(2^d, 45) = 360$ & $d = 3$ \\
\bottomrule
\end{tabular}
\end{table}

The convergence of these constraints motivates the following conditional uniqueness statement.

\begin{theorem}[Uniqueness of framework-compatible dimension (conditional)]
\raggedright
There exists a unique $d\in\mathbb{N}$ such that $\text{FrameworkCompatibleDimension}(d)$. That dimension is $d=3$.

\end{theorem}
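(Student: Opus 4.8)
The plan is to treat this statement as the formal packaging of the three independent selection arguments tabulated above (linking, $2^d$-tick counting, and gap-45 synchronization), reading the predicate $\text{FrameworkCompatibleDimension}(d)$ as the conjunction of the constraints imposed by those hypotheses. The substantive mathematical work has already been carried out in the preceding \emph{Conditional dimensional rigidity} theorem; what remains is to assemble existence and uniqueness cleanly into a single statement.

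First I would establish existence by direct verification at $d=3$. For the gap-45 constraint I would factor $8 = 2^3$ and $45 = 3^2 \cdot 5$, giving $\mathrm{lcm}(8,45) = 2^3 \cdot 3^2 \cdot 5 = 360$, which confirms the synchronization hypothesis. The linking requirement is satisfied since $d=3$ is precisely the classical dimension admitting nontrivial linking of disjoint closed curves (e.g.\ the Hopf link), and the $2^d$-tick count gives $2^3 = 8$, matching the minimal Hamiltonian period from Theorem~T6. Hence $d=3$ satisfies every constraint route, establishing $\text{FrameworkCompatibleDimension}(3)$.

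Next I would establish uniqueness by invoking the \emph{Conditional dimensional rigidity} theorem directly. From the gap-45 hypothesis, $\mathrm{lcm}(2^d,45) = 2^{\max(d,3)} \cdot 3^2 \cdot 5 = 360$ forces $\max(d,3) = 3$, hence $d \le 3$. From the linking requirement, nontrivial linking (and the associated distinguishability property) fails for $d < 3$, forcing $d \ge 3$. The two bounds combine to pin $d = 3$ as the unique solution, so no dimension other than $3$ satisfies the predicate. I would note that the paired upper and lower bounds are individually sufficient once combined, so the conclusion is stable even if one treats any single hypothesis as redundant reinforcement rather than a strictly necessary premise.

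The main obstacle here is not computational but definitional: the predicate $\text{FrameworkCompatibleDimension}(d)$ is not given an independent formal definition in the text, so the entire force of the theorem rests on fixing its meaning as the conjunction of the stated additional hypotheses. Once that interpretation is committed, the theorem is essentially a corollary of the rigidity result, and I would flag explicitly---consistent with the \textbf{Warning} opening this subsection and with the relative-necessity convention (Definition~\ref{def:relative-necessity})---that both existence and uniqueness are \emph{conditional} on the gap-45 and linking hypotheses, neither of which is derived from the cost axioms (A1--A3) or the ledger axioms (L1--L2).
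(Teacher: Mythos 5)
Your proposal is correct and matches the paper's (implicit) argument: the paper offers no separate proof for this final theorem, treating it as a repackaging of the preceding \emph{Conditional dimensional rigidity} result, which is precisely what you invoke---gap-45 for the upper bound $d\le 3$, linking for the lower bound $d\ge 3$, and direct verification of $\mathrm{lcm}(8,45)=360$ for existence at $d=3$. Your closing observation that $\text{FrameworkCompatibleDimension}(d)$ is never independently defined, so the theorem's content reduces to the conjunction of the stated hypotheses, is an accurate reading of the paper's own framing.
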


\section{Conclusion}\label{sec:conclusions}

This paper develops an information-theoretic framework for discrete dynamics grounded in a single primitive: \emph{recognition as pairwise comparison}. Under explicitly stated axioms and structural assumptions, coherence constraints on ratio-based comparison determine a canonical cost functional and yield a corresponding discrete ledger formalism for recognition events.

\paragraph{The cost-first cascade.}
The derivation proceeds in three stages:

\textbf{Stage 1: Cost uniqueness from coherence.}
Starting from the observation that comparing quantities $a$ and $b$ yields a ratio $x=a/b$, we assign an information-theoretic cost $F(x)$ quantifying the penalty for deviation from equilibrium. Coherence demands that indirect comparisons (chaining $a\to b\to c$) yield the same total cost as the direct comparison $a\to c$. Together with normalization ($F(1)=0$) and quadratic calibration at unity, this coherence constraint forces a unique solution to the d'Alembert functional equation (Theorem~T5, proved in the companion paper):
\[
J(x) = \frac{1}{2}(x + x^{-1}) - 1.
\]
This cost functional is the keystone input for the ledger development in this manuscript. It yields two immediate consequences: \textbf{perfect balance} 
\begin{center}
    ($\mathrm{Bal}(x)\iff J(x)=0\iff x=1$),
\end{center} 
giving the unique zero-cost equilibrium, and \textbf{finite-cost admissibility} 
\begin{center}
    ($\mathrm{Exists}(x)\iff J(x)<\infty$),
\end{center}
giving the admissible domain of ratios (for this $J$, all $x>0$), while the boundary limits $x\to 0^+$ and $x\to\infty$ are excluded by divergence. The boundary divergence result (Theorem~T1) is thus a consequence of the functional form of $J$, not an additional axiom.

\textbf{Stage 2: Discrete ledger structure from minimal encoding.}
Once the cost is fixed, we ask how recognition events are recorded. We model this via a discrete ledger---a sequential state record. Under deterministic update semantics (Axiom~L1) and minimality (Axiom~L2), together with explicit structural assumptions (conservation, no sources/sinks, pairwise locality, and quantization), we obtain the following discrete structure:
\begin{itemize}
\item \textbf{Atomic ticks (Theorem~T2):} Minimality (no intra-tick ordering metadata; Axiom~L2) combined with non-commutativity of events forces at most one recognition event per tick. Time advances in discrete, indivisible steps.
\item \textbf{Double-entry postings:} Conservation of total balance, combined with pairwise locality (each event affects exactly two nodes) and the absence of external sources/sinks, forces each recognition event to record a balanced debit--credit pair: $+\Delta_t$ on one node, $-\Delta_t$ on the other. The reciprocity of the cost ($J(x)=J(x^{-1})$) is naturally compatible with reversing event orientation without changing cost.
\item \textbf{Quantized units (Proposition~T8):} Discreteness (postings in $\delta\mathbb{Z}$ with no torsion) forces an infinite cyclic group structure isomorphic to $\mathbb{Z}$. Every ledger value is an integer multiple of a fundamental quantum $\delta$, forbidding fractional amounts and ensuring unique integer representation.
\end{itemize}
Recognition events thus induce a discrete dynamics on a directed graph: each event posts a signed increment $\pm\delta$ on exactly two nodes, preserving total balance while enabling local state changes.

\textbf{Stage 3: Scalar potentials from time-aggregated cycle closure.}
On graphs with cycles, atomic single-edge events generically create transient circulation (net flux around closed loops at individual ticks). To recover a scalar potential representation---enabling path-independent cost summation---we impose an explicit \emph{time-aggregated cycle-closure hypothesis}: after netting flows over a finite clearing window, cumulative flux around every cycle vanishes (no-arbitrage / clearing). This is a structural assumption, not derived from the cost axioms. Under this hypothesis:
\begin{itemize}
\item \textbf{Theorem~T3 (equivalence):} Time-aggregated cycle closure is equivalent to time-aggregated path-independence: the cumulative edge flow over a clearing window depends only on endpoints, not on the route taken. This is the discrete analogue of a curl-free (irrotational) vector field.
\item \textbf{Theorem~T4 (potentials):} By the discrete Poincaré lemma, the cumulative edge flow admits a unique scalar potential on each connected component, up to an additive constant (gauge freedom). This potential representation mirrors classical potential theory but applies to the \emph{cleared cumulative flow}, accounting for the impulse nature of atomic events.
\end{itemize}
The clearing hypothesis bridges the microscopic (atomic single-edge impulses) and macroscopic (path-independent potentials) scales: loop constraints are enforced only after temporal aggregation, allowing transient circulation at short times while ensuring global consistency over longer windows.

\paragraph{Counting constraints and conditional dimension selection.}
When the recognition structure is a hypercube graph $Q_d$, atomicity imposes a combinatorial constraint: visiting all $2^d$ vertices without repetition requires at least $2^d$ ticks. For $d=3$, the Gray code provides an explicit 8-tick Hamiltonian cycle (Theorems~T6--T7), realizing this minimal period. A \emph{conditional} argument (Section~\ref{sec:dimension}) selects $d=3$ under additional topological (linking) and synchronization (gap-45) hypotheses; this selection is not forced by the core framework alone.

The golden ratio $\phi=(1+\sqrt{5})/2$ appears as the unique positive fixed point of the self-similar update rule $x\mapsto 1+1/x$, and defines a natural log-scale reference $J_{\text{bit}}=\ln\phi\approx 0.481$. Stronger forcing claims for $\phi$ would require additional dynamical or self-similarity hypotheses beyond the cost axioms.

\paragraph{What is derived, what is assumed.}
To ensure clarity, we summarize the logical structure:
\begin{itemize}
\item \textbf{Derived from coherence:} The cost functional $J$ (Theorem~T5), perfect balance $\mathrm{Bal}(x)$ and finite-cost admissibility $\mathrm{Exists}(x)$, boundary divergence (Theorem~T1).
\item \textbf{Derived from ledger axioms (L1--L2) plus structural assumptions:} Atomic ticks (Theorem~T2), double-entry postings, quantized units (Proposition~T8).
\item \textbf{Proven under the cycle-closure hypothesis:} Equivalence of cycle closure and path-independence (Theorem~T3), scalar potential representation (Theorem~T4).
\item \textbf{Derived from atomicity and graph structure:} Minimal period $2^d$ (Theorems~T6--T7).
\item \textbf{Conditional on additional hypotheses:} Dimension selection $d=3$ (topological linking + gap-45 synchronization).
\end{itemize}

\paragraph{Formal verification and reproducibility.}
All claims in this manuscript are stated relative to explicit axioms and structural assumptions (Definition~\ref{def:relative-necessity}). In particular, the cost uniqueness theorem (T5) is treated here as a cited input (proved in the companion paper), while the remaining ledger, cycle-closure, and counting results are proved within this document under their stated hypotheses.

\paragraph{Significance and outlook.}
The results provide a cost-first, information-theoretic foundation for discrete dynamics in which a canonical comparison cost governs a minimal discrete ledger of recognition events. Within the stated axioms and assumptions, the framework yields atomic time (T2), conservation-compatible double-entry updates, quantized posting units (T8), and, under a time-aggregated cycle-closure hypothesis, a potential representation of cleared cumulative flows (T3--T4).

The present work establishes the foundational cost-to-ledger cascade. Future extensions may explore: (i) the interplay between clearing timescales and graph topology, (ii) the role of self-similarity and the golden ratio in dynamical attractors, and (iii) connections to continuum limits and potential physical identification maps, with any such applications requiring additional explicit assumptions beyond the present manuscript.

\begingroup
\setlength{\emergencystretch}{2em}
\sloppy

\endgroup

\end{document}